\PassOptionsToPackage{unicode}{hyperref}
\PassOptionsToPackage{hyphens}{url}
\PassOptionsToPackage{dvipsnames,svgnames,x11names}{xcolor}
\documentclass[
  12pt]{article}

\usepackage{amsmath,amssymb}

\usepackage[utf8]{inputenc} 
\usepackage[T1]{fontenc}    
\usepackage{hyperref}       
\usepackage{url}            
\usepackage{booktabs}       
\usepackage{amsfonts}       
\usepackage{nicefrac}       
\usepackage{microtype}      
\usepackage{xcolor, soul}         
\usepackage{amsmath}
\usepackage{comment}
\usepackage{amssymb}
\usepackage{amsfonts}
\usepackage{multirow}
\usepackage{amsthm}
\usepackage{float}
\usepackage{subfig}
\usepackage{xcolor}
\usepackage{array,epsfig,fancyheadings,rotating}
\usepackage{algorithm}
\usepackage{algpseudocode}
\newtheorem{theorem}{Theorem}
\newtheorem{lemma}{Lemma}
\newtheorem{remark}{Remark}
\newcommand{\bx}{\boldsymbol{x}}
\newcommand{\bt}{\boldsymbol{\theta}}

\DeclareMathOperator*{\argmax}{arg\,max}
\DeclareMathOperator*{\argmin}{arg\,min}

\usepackage{caption}
\usepackage{float} 
\usepackage{graphicx} 

\usepackage{iftex}
\ifPDFTeX
  \usepackage[T1]{fontenc}
  \usepackage[utf8]{inputenc}
  \usepackage{textcomp} 
\else 
  \usepackage{unicode-math}
  \defaultfontfeatures{Scale=MatchLowercase}
  \defaultfontfeatures[\rmfamily]{Ligatures=TeX,Scale=1}
\fi
\usepackage{lmodern}
\ifPDFTeX\else  
\fi
\IfFileExists{upquote.sty}{\usepackage{upquote}}{}
\IfFileExists{microtype.sty}{
  \usepackage[]{microtype}
  \UseMicrotypeSet[protrusion]{basicmath} 
}{}
\makeatletter
\@ifundefined{KOMAClassName}{
  \IfFileExists{parskip.sty}{%
    \usepackage{parskip}
  }{
    \setlength{\parindent}{0pt}
    \setlength{\parskip}{6pt plus 2pt minus 1pt}}
}{
  \KOMAoptions{parskip=half}}
\makeatother
\usepackage{xcolor}
\makeatletter
\ifx\paragraph\undefined\else
  \let\oldparagraph\paragraph
  \renewcommand{\paragraph}{
    \@ifstar
      \xxxParagraphStar
      \xxxParagraphNoStar
  }
  \newcommand{\xxxParagraphStar}[1]{\oldparagraph*{#1}\mbox{}}
  \newcommand{\xxxParagraphNoStar}[1]{\oldparagraph{#1}\mbox{}}
\fi
\ifx\subparagraph\undefined\else
  \let\oldsubparagraph\subparagraph
  \renewcommand{\subparagraph}{
    \@ifstar
      \xxxSubParagraphStar
      \xxxSubParagraphNoStar
  }
  \newcommand{\xxxSubParagraphStar}[1]{\oldsubparagraph*{#1}\mbox{}}
  \newcommand{\xxxSubParagraphNoStar}[1]{\oldsubparagraph{#1}\mbox{}}
\fi
\makeatother

\usepackage{longtable,booktabs,array}
\usepackage{calc} 
\usepackage{etoolbox}
\makeatletter
\patchcmd\longtable{\par}{\if@noskipsec\mbox{}\fi\par}{}{}
\makeatother
\IfFileExists{footnotehyper.sty}{\usepackage{footnotehyper}}{\usepackage{footnote}}
\makesavenoteenv{longtable}
\usepackage{graphicx}
\makeatletter
\def\maxwidth{\ifdim\Gin@nat@width>\linewidth\linewidth\else\Gin@nat@width\fi}
\def\maxheight{\ifdim\Gin@nat@height>\textheight\textheight\else\Gin@nat@height\fi}
\makeatother
\setkeys{Gin}{width=\maxwidth,height=\maxheight,keepaspectratio}
\makeatletter
\def\fps@figure{htbp}
\makeatother

\makeatletter
\@ifpackageloaded{caption}{}{\usepackage{caption}}
\AtBeginDocument{%
\ifdefined\contentsname
  \renewcommand*\contentsname{Table of contents}
\else
  \newcommand\contentsname{Table of contents}
\fi
\ifdefined\listfigurename
  \renewcommand*\listfigurename{List of Figures}
\else
  \newcommand\listfigurename{List of Figures}
\fi
\ifdefined\listtablename
  \renewcommand*\listtablename{List of Tables}
\else
  \newcommand\listtablename{List of Tables}
\fi
\ifdefined\figurename
  \renewcommand*\figurename{Figure}
\else
  \newcommand\figurename{Figure}
\fi
\ifdefined\tablename
  \renewcommand*\tablename{Table}
\else
  \newcommand\tablename{Table}
\fi
}
\@ifpackageloaded{float}{}{\usepackage{float}}
\floatstyle{ruled}
\@ifundefined{c@chapter}{\newfloat{codelisting}{h}{lop}}{\newfloat{codelisting}{h}{lop}[chapter]}
\floatname{codelisting}{Listing}

\makeatother
\makeatletter
\@ifpackageloaded{caption}{}{\usepackage{caption}}
\@ifpackageloaded{subcaption}{}{\usepackage{subcaption}}
\makeatother

\ifLuaTeX
  \usepackage{selnolig}  
\fi
\usepackage[]{natbib}
\usepackage{bookmark}

\IfFileExists{xurl.sty}{\usepackage{xurl}}{} 
\hypersetup{
  pdftitle={Title},
  pdfauthor={Author 1; Author 2},
  pdfkeywords={3 to 6 keywords, that do not appear in the title},
  colorlinks=true,
  linkcolor={blue},
  filecolor={Maroon},
  citecolor={Blue},
  urlcolor={Blue},
  pdfcreator={LaTeX via pandoc}}

\newcommand{\anon}{1}

\begin{document}

\def\spacingset#1{\renewcommand{\baselinestretch}%
{#1}\small\normalsize} \spacingset{1.2}

\if1\anon
{
  \title{\bf Nearly Optimal Subdata Selection}
  \author{Min Yang\thanks{
     The author gratefully acknowledges \textit{NSF Grants DMS-22-10546.}}\hspace{.2cm}\\
    Department of Mathematics, Statistics, and Computer Science\\
  University of Illinois at Chicago\\
  and\\
       Wei Zheng \\
  Department of Business Analytics and Statistics \\
  University of Tennessee \\
    and \\
    John Stufken\thanks{
     The authors gratefully acknowledges \textit{NSF Grants DMS-23-04767.}}\\
    Department of Statistics \\
  George Mason University \\
    and \\
    Ming-Chung Chang \thanks{
     The author gratefully acknowledges \textit{NSTC Grants NSTC 114-2628-M-001-008-MY3 .}}\\
    Institute of Statistical Science \\
 Academia Sinica \\
 and\\
   Ting Tian \\
  School of Mathematics \\
  Sun Yat-sen University \\
and\\
  Xueqin Wang \\
  School of Management \\
  University of Science and Technology of China 
 }
 \date{}
  \maketitle
 
} \fi

\if0\anon
{
  \bigskip
  \begin{center}
    {\LARGE\bf Efficient Subdata Selection for Parameter Estimation}
\end{center}
  \medskip
} \fi

\begin{abstract}
When, in terms of the number of data points, the size of a dataset exceeds available computing resources, or when labeling is expensive, an attractive solution consists of selecting only some of the data points (subdata) for further consideration. A central question for selecting subdata of size $n$ from $N$ available data points is which $n$ points to select. While an answer to this question depends on the objective, one approach for a parametric model and a focus on parameter estimation is to select subdata that retains maximal information. Identifying such subdata is a classical NP-hard problem due to its inherent discreteness. Based on optimal approximate design theory, we develop a new methodology for information-based subdata selection, resulting in subdata that approaches the optimal solution.  To achieve this,  we develop a novel algorithm that applies to a general model, accommodates arbitrary choices of $N$ and $n$, and supports multiple optimality criteria, and we prove its convergence. 
Moreover, the new methodology facilitates an assessment of the efficiency of subdata selected by any method by obtaining tight lower and upper bounds for the efficiency. We show that the subdata obtained through the new methodology is highly efficient and outperforms all existing methods. 
\end{abstract}

\section{Introduction}

Subdata selection is a rapidly growing area of research \citep{Mirzasoleiman_2020}, driven in part by the increasing size of datasets in terms of the number of observations. It is also relevant for smaller datasets, particularly when they are unlabeled and labeling is expensive, requiring the labeling of only selected data points \citep{Tharwat_Schenck_2023}. For larger datasets, the trade-off between efficiency and computational cost must be considered, while for smaller datasets, the focus is primarily on efficiency \citep{WangYuSingh_JMLR2017}.

In both scenarios, the challenge lies in selecting $n$ observations from a dataset with $N$ observations, where $N$ is much larger than $n$, while retaining the maximum amount of information. We will refer to the original dataset of size $N$ as \textit{full data} and to the selected dataset of size $n$ as \textit{subdata}.

Several approaches have been studied for subdata selection. For example, the coreset approach aims to select subdata that can produce results close to those obtained from the full data \citep{Feldman2020CoreSets}. Another approach, aimed mainly at selecting subdata with a similar distribution as the full data, is the support points method \citep{Mak_Joseph_2018}. It minimizes the energy distance, a measure originally proposed by \cite{Szekely_Rizzo_2004} to compare distributions and test the goodness of fit. This method was further developed to work for larger datasets into the SPlit method \citep{Joseph_Vakayil_2022}. A third approach, under the assumption of a parametric model for the full data, uses the Fisher information matrix for selecting subdata \citep{Wang2019Information-BasedRegression}. Our proposed method is information-based and significantly improves upon previous work using this approach.

Three key challenges associated with subdata selection are: 
\begin{enumerate}
    \item How can we assess the quality of selected subdata?
    \item Given the full data, what subdata size $n$ yields acceptable results?
    \item Given the full data, a subdata size $n$, and a method for assessing the quality of the subdata, which $n$ data points should we select? 
\end{enumerate}

The quality of selected subdata depends partly on the study's objective. Subdata well-suited for prediction may not be ideal for estimation or exploration. We will focus on parametric statistical models with parameter estimation as the main objective. We will formulate an optimality criterion based on this objective and develop an algorithm to select subdata of size $n$ that optimizes this criterion. Previous attempts to develop such algorithms \citep{Wang2019Information-BasedRegression, Wang2021OrthogonalRegression}, have faced the challenge of the optimization problem being NP-hard, and an assessment of the performance of the selected subdata relative to the optimal solution has been unavailable. Our approach addresses this by developing a tool that shows that our algorithm selects highly efficient subdata, generally outperforming other methods.

The second focus of this paper is the third key challenge, which is addressed using a novel approach that utilizes optimal bounded approximate designs, supported by a computationally efficient algorithm.

In Section~\ref{sec:preliminary} we will provide background information, which is followed in Section~\ref{sec:method} with details about our new approach, including theoretical properties and a description of the algorithm that makes it work. In Section~\ref{sec:empirical} we demonstrate the efficiency of our method through simulation studies and make comparisons to alternative methods. We conclude with a brief discussion in Section~\ref{sec:discussion}. Proofs, more details of the algorithm, and additional simulation studies are relegated to the supplementary material.

\section{Preliminaries} \label{sec:preliminary}
In this section, we formulate the problem, connect it to concepts from optimal design of experiments, and motivate why we opt for the information-based approach. 

For the $i$th observation, $i=1,\ldots,N$, write $(y_i, \boldsymbol{x}_i)$, where $y_i$ and $\boldsymbol{x}_i$ denote the label and $p\times 1$ vector of features, respectively. Since we will only use $\boldsymbol{x}_i$ for subdata selection, the $y_i$'s need not be available. We assume that the relationship between $y_i$ and $\boldsymbol{x}_i$ can be described by a parametric linear or generalized linear model (GLM) with parameter vector $\boldsymbol{\theta}$, and that inferences about $\boldsymbol{\theta}$ are based on its Maximum Likelihood Estimator (MLE).

When the number of observations $N$ is extraordinarily large, analyzing the entire dataset can be infeasible or too expensive. Alternatively, when $y_i$'s are unavailable, such as in an active learning setup, we may need to select a subset of $\boldsymbol{x}_i$'s for which we collect the $y_i$'s if labeling is expensive. In both cases, we need to select subdata from the full data and use the subdata for inference. Using indicator variables $\delta_i$, with $\delta_i=1$ if and only if the $i$th observation belongs to the selected subdata, we want to identify $\boldsymbol{\delta} = (\delta_1, \ldots, \delta_N)$, subject to $\sum_{i=1}^N \delta_i = n$, where $n$ is the subdata size, such that inferences based on the subdata are optimal in a way to be specified. 

We will use the Fisher information matrix for $\boldsymbol{\theta}$, $I_{\boldsymbol{\theta}}$, to evaluate subdata performance. Let $I_{\boldsymbol{\theta}}(\boldsymbol{x}_i)$ represent the information matrix for $\boldsymbol{\theta}$ for the $i$th data point. Assuming independence, the information matrix for $\boldsymbol{\theta}$ using the subdata is given by
\begin{equation}\label{iboss1}
I_{\boldsymbol{\theta}}=\sum_{i=1}^N \delta_i I_{\boldsymbol{\theta}}(\boldsymbol{x}_i).
\end{equation}
 Let $\hat{\boldsymbol{\theta}}$ be the MLE of $\boldsymbol{\theta}$, which is assumed without loss of generality to be unique for the selected subdata. Since $Var(\hat{\boldsymbol{\theta}})$ is proportional to $I^{-1}_{\boldsymbol{\theta}}$, we aim for subdata that, in some sense, minimizes $I^{-1}_{\boldsymbol{\theta}}$. We borrow the approach from optimal design of experiments \citep{Kiefer1974GeneralTheory} by minimizing $\Phi(I_{\boldsymbol{\theta}})$ for a convex function $\Phi$. Two popular choices are $\Phi(I_{\boldsymbol{\theta}}) = \log(\det(I^{-1}_{\boldsymbol{\theta}}))$ and $\Phi(I_{\boldsymbol{\theta}}) = Tr(I^{-1}_{\boldsymbol{\theta}}))$, corresponding to $D$- and $A$-optimality, respectively. Thus, for some $\Phi$, we seek subdata with indicator $\boldsymbol{\delta}^{opt}$ so that
\begin{equation}\label{iboss2}
\boldsymbol{\delta}^{opt}=\arg \min_{\boldsymbol{\delta}}\Phi(I_{\boldsymbol{\theta}}).
\end{equation} 
For ease of presentation, we will focus on the $A$- and $D$-optimality criteria. However, the results also hold for other optimality criteria, such as the $\Phi_p$-optimality criteria defined in \cite{Yang_Biedermann_Tang_JASA}, provided that $p < \infty$. The proofs for the $\Phi_p$-optimality criteria are essentially the same with some additional notations.
\begin{remark}
This formulation is a general framework. For example, the subdata selection considered in \cite{WangYuSingh_JMLR2017} where $I_{\boldsymbol{\theta}}(\boldsymbol{x}_i)=\boldsymbol{x}_i\boldsymbol{x}'_i$ can be viewed the subdata selection for the linear first-order model with i.i.d. normal distribution error under $A$-optimality.   
\end{remark}

With ${N}\choose {n}$ different subdata choices, finding an exact solution to this minimization problem is computationally hard, even for modest values of $N$ and $n$. Using a result from \cite{Welch1982}, we show that a polynomial time algorithm does not exist for $D$-optimality, i.e., it is an NP-hard problem.
\begin{theorem}\label{np-hard}
$D$-optimal exact subdata selection is NP-hard. 
\end{theorem}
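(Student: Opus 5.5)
The plan is to reduce from the result of \cite{Welch1982}, which shows that the exact $D$-optimal design problem is NP-hard. In that problem one is given a finite candidate set $\{\boldsymbol{x}_1,\ldots,\boldsymbol{x}_N\}\subset\mathbb{R}^p$ and an integer $n$, and one must choose $n$ of the points, without replication, to maximize $\det\bigl(\sum_i \delta_i \boldsymbol{x}_i\boldsymbol{x}_i'\bigr)$. Welch established NP-hardness by reduction from a known NP-complete problem (essentially a clique/exact-cover type problem encoded in the candidate vectors), in the linear-model setting where each point has information $\boldsymbol{x}\boldsymbol{x}'$. I will show that this problem is a special case of $D$-optimal exact subdata selection as formulated in (\ref{iboss1})--(\ref{iboss2}).

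\textbf{Step 1: embed the instance.} Given any instance $(\{\boldsymbol{x}_i\}_{i=1}^N, n)$ of Welch's problem, form the full data consisting of exactly these $N$ feature vectors. Take the model to be the linear model $y_i=\boldsymbol{x}_i'\boldsymbol{\theta}+\varepsilon_i$ with i.i.d.\ $N(0,\sigma^2)$ errors and $\sigma^2$ fixed, say $\sigma^2=1$. Then $I_{\boldsymbol{\theta}}(\boldsymbol{x}_i)=\boldsymbol{x}_i\boldsymbol{x}_i'$, as in the Remark. Consequently
\[
\Phi(I_{\boldsymbol{\theta}})=-\log\det\Bigl(\sum_{i=1}^N\delta_i\boldsymbol{x}_i\boldsymbol{x}_i'\Bigr)
\]
for every $\boldsymbol{\delta}$ with $\sum\delta_i=n$. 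Minimizing $\Phi$ over subdata is therefore the same as maximizing the determinant over size-$n$ subsets.

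\textbf{Step 2: handle the boundary cases.} Subsets with a singular information matrix have $\Phi=+\infty$, which is consistent with determinant $0$. The decision versions also coincide: ``is there $\boldsymbol{\delta}$ with $\det \ge c$'' corresponds to ``$\Phi\le -\log c$''. Any rational threshold can be kept if we phrase the decision problem in terms of the determinant, which is monotone in $\Phi$.

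\textbf{Step 3: conclude.} The transformation is the identity on the input, so it runs in polynomial time. A polynomial-time algorithm for $D$-optimal subdata selection would therefore solve Welch's NP-hard problem, and the claim follows.

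\textbf{Main obstacle.} The main point to check is that Welch's hardness result concerns selection \emph{without replacement} from a finite candidate list, which is exactly our $\delta_i\in\{0,1\}$ setting. Versions with replication allowed would not transfer directly. If Welch's statement permits repeated points, I would instead embed his reduction instance directly. Since his candidate vectors come from a combinatorial encoding in which optimal designs use distinct points, I would argue that the optimum in that instance never uses replicates, so the two problems agree on those instances. I would also note that $D$-optimal exact subdata selection is NP-hard as a class even though it includes the GLM cases, because hardness of the linear special case suffices.
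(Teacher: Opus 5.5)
Your proof has a genuine gap in Step 1, and it sits exactly where you flagged the ``main obstacle''. Your identity reduction needs Welch's problem to be selection \emph{without} replication. In the exact design problem that \cite{Welch1982} proves NP-hard, as the paper reads it, a design is a vector of nonnegative integers $(m_1,\ldots,m_N)$ with $\sum_i m_i=n$, so candidate points may be replicated. For that problem the identity map is not a reduction. Subdata selection on the same $N$ points is a strict restriction of Welch's problem, and its optimum can differ. For example, in a one-parameter model with candidates $x_1=10$, $x_2=1$ and $n=2$, the $D$-optimal exact design uses $x_1$ twice; and for $n>N$ the subdata problem is not even feasible. So a polynomial-time algorithm for subdata selection on the original candidate set tells you nothing about Welch's problem. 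Your fallback, that optimal designs in Welch's hard instances never use replicates, is an unverified claim about the internals of his construction. You neither state that construction nor prove the claim, so the argument remains conditional on a fact you have not established.

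The missing idea is a black-box reduction that makes the structure of Welch's instances irrelevant: replicate each candidate point $n$ times, giving full data of size $nN$. A size-$n$ subset of this dataset that picks $m_i$ copies of $\boldsymbol{x}_i$ has information matrix $\sum_i m_i\boldsymbol{x}_i\boldsymbol{x}_i'$. Since $n$ copies of every point are available, every nonnegative integer vector with $\sum_i m_i=n$ arises this way. Hence $D$-optimal subdata for the replicated dataset is precisely a $D$-optimal exact design for the original candidates. The instance grows only by the factor $n$, which is polynomial as long as $n$ is polynomially bounded in the input size. This is the paper's argument, and with it your Steps 2 and 3 go through unchanged.
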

While Theorem~\ref{np-hard} is for $D$-optimality, we expect the result to hold for most other optimality criteria as well, including $A$-optimality. In view of Theorem~\ref{np-hard}, we propose a new polynomial time algorithm for selecting highly efficient subdata. We will prove that subdata selected by the algorithm is indeed highly efficient.

\section{The proposed method} \label{sec:method}
We consider the following continuous optimization problem, which is a  relaxation of (\ref{iboss2}):
\begin{equation}\label{psi_design}
\xi^*=\argmin_{\xi} \Phi(\xi),
\end{equation}
where $\xi=\{(\boldsymbol{x}_i,w_i), i=1,\ldots,N\}$ with $0\leq w_i\leq 1/n$ and $\sum_{i=1}^N w_i=1$, and $\Phi(\xi)= \Phi\left(\sum_{i=1}^N w_i I_{\boldsymbol{\theta}}(\boldsymbol{x}_i)\right)$. Here, $\xi$ can be interpreted as a bounded approximate design. A key tool in this framework is the general equivalence theorem, which provides necessary and sufficient conditions for optimality under criteria such as $A$- and $D$-optimality, and enables efficient identification of optimal designs.

For GLMs, $I_{\boldsymbol{\theta}}$ depends on $\boldsymbol{\theta}$. This creates a circular problem in which optimizing Equation (\ref{psi_design}) over $\xi$ is needed to find a design that optimizes inference for $\bt$, but where the function to be optimized depends on the unknown $\bt$. We will return to this in Subsection~\ref{subsec:theta}, but for simplicity we assume for now that $I_{\boldsymbol{\theta}}$ is entirely known, as is the case for linear models.

To find an optimal bounded approximate design, we will need an equivalence theorem for bounded approximate designs. To find subdata of size $n$ from an optimal bounded approximate design $\xi^*$, we will take all $x_i$'s with weight $1/n$, and supplement this set with points that have the next largest weights in $\xi^*$, until we have a total of $n$ points. This works because in an optimal bounded approximate design, there are at least $n$ points with a nonzero weight, and it works especially well if $\xi^*$ has many points with a weight close to $1/n$.

For an optimal bounded approximate design $\xi^*$, let $S^*$ denote the subdata of size $n$ obtained from $\xi^*$ as described in the previous paragraph. For arbitrary subdata $S$ of size $n$, we use the same notation $S$ for the bounded approximate design, where the indicator vector of $S$ defines an approximate design with weights $1/n$ for selected points and 0 otherwise. Furthermore, let $S_{\text{opt}}$ denote the optimal subdata, which is usually unknown. Then
\begin{equation} \label{ineq:optimal}
\Phi(\xi^*) \le \Phi(S_{\text{opt})} \le \Phi(S^*).
\end{equation}
The efficiency of the arbitrary subdata $S$ can then be defined as $\text{Eff}(S)=\frac{\Phi(S_{\text{opt}})}{\Phi(S)}$. Although this is unknown since we don't know $S_{\text{opt}}$, we can find an upper and lower bound for this efficiency by using Equation.~(\ref{ineq:optimal}):
\begin{equation}\label{eff_bounds}
\frac{\Phi(\xi^*)}{\Phi(S)}\leq \text{Eff}(S)\leq \frac{\Phi(S^*)}{\Phi(S)}=\left. \frac{\Phi(\xi^*)}{\Phi(S)} \middle/ \frac{\Phi(\xi^*)}{\Phi(S^*)} \right..
\end{equation}
Since $\frac{\Phi(\xi^*)}{\Phi(S^*)}$ is usually very close to 1 \citep{Pukelsheim_Rieder_biometrika}, it follows from Equation~(\ref{eff_bounds}) that $\text{Eff}(S)$ is approximately equal to its lower bound $\Phi(\xi^*)/\Phi(S)$. In our empirical studies in Section~\ref{sec:empirical}, we observe that $S^*$ is indeed highly efficient with an efficiency lower bound of at least 99.9\% for large $N$ and $n$. As a result, the upper and lower bounds for $\text{Eff}(S)$ will be almost the same. 

We point out that when computing the efficiency for $D$-optimality, we use the commonly accepted form $\Phi(I_{\boldsymbol{\theta}}) = \det(I^{-1}_{\boldsymbol{\theta}})^{1/p_1}$, where $p_1$ is the dimension of $I_{\boldsymbol{\theta}}$. The form $\Phi(I_{\boldsymbol{\theta}}) = \log(\det(I_{\bt}^{-1}))$ given in Section~\ref{sec:preliminary} is however used for the Equivalence Theorem in Subsection~\ref{subsec:ET}.

Since all of the above hinges on finding the optimal bounded approximate design $\xi^*$, we now turn to the theorem and algorithm that will facilitate this.

\subsection{The Equivalence Theorem for bounded approximate designs} \label{subsec:ET} 
Before presenting the first theorem, we define the class of optimality criteria considered in this paper. The optimality criterion $\Phi$ must be convex and linearly differentiable. The latter means that for approximate designs $\xi$ and $\eta = \{(\bx_i,\lambda_i),i=1,...,N\}$ it holds that 
$$
F_\Phi(\xi,\eta) = \sum_{i=1}^N \lambda_i F_\Phi(\xi;\bx_i),
$$
where 
$$
F_\Phi(\xi,\eta) = \lim_{\alpha \downarrow 0} \frac{\Phi((1-\alpha)\xi+\alpha\eta)-\Phi(\xi)}{\alpha}
$$
is the directional derivative of $\Phi$ at $\xi$ in the direction of $\eta$. Also, $F_\Phi(\xi;\bx_i)$ is the directional derivative of $\Phi$ at $\xi$ in the direction of the Dirac measure $\delta_{\bx_i}$. An additional property that we will use is that for designs $\xi_1$, $\xi_2$ and $\xi_3$, for all positive values $\gamma$ for which $\xi = \xi_1 + \gamma(\xi_2 - \xi_3)$ is again a design, it holds that $\Phi(\xi)$ is infinitely differentiable in $\gamma$. A class of optimality criteria that meets these conditions is the class of $\Phi_p$-optimality criteria for finite $p$ (cf. \cite{Yang_Biedermann_Tang_JASA}), which includes $D$- and $A$-optimality. The latter two are the main focus of this paper. In the statements of the theorems, we will use $\Lambda$ for the class of optimality criteria that satisfy the above conditions.

The following theorem is adapted from \cite{Sahm2001ADesigns}. A major difference is that the design space $\mathcal{X}$ is discrete in our case and continuous in \cite{Sahm2001ADesigns}. In terms of weight constraints, the formulation of the theorem, which uses the lower bound $\nu_i$ and the upper bound $\mu_i$ for the weight $w_i$ is more general than we will need: For our application $\nu_i=0$ and $\mu_i=1/n$ for all $i=1,\ldots,N$.
\begin{theorem}\label{eq_thm} 
(Equivalence Theorem for bounded approximate designs) Let $\{\nu_1,\ldots,\nu_N\}$ and $\{\mu_1,\ldots,\mu_N\}$ be two sets of numbers so that $0\le \nu_i < \mu_i$, $\sum_{i=1}^N \nu_i \le 1$, and $\sum_{i=1}^N \mu_i \ge 1$.
Let $\Xi$ be the set of all bounded approximate designs $\{(\boldsymbol{x}_i,w_i), i=1,\ldots,N\}$ with $ \nu_i\leq w_i\leq \mu_i$ and $\sum_{i=1}^N w_i=1$, and let $\Phi \in \Lambda$. The following two statements are equivalent:
\begin{itemize}
    \item[(I)] $\xi^*=\{(\boldsymbol{x}_i,w_i), i=1,\ldots, N\}$ is $\Phi$-optimal in $\Xi$.
    \item[(II)] There is a partition of the design space, $\mathcal{X} = \mathcal{X}_1 \cup \mathcal{X}_2 \cup \mathcal{X}_3$, and a number $s$ such that
    \begin{itemize}
        \item[(a)] $w_i=\nu_i$ for $\boldsymbol{x}_i\in\mathcal{X}_1$ and $w_i=\mu_i$ for $\boldsymbol{x}_i\in\mathcal{X}_3$;
    \item[(b)] If $\mathcal{X}_2$ is empty, then  $\max\limits_{\boldsymbol{x}_i\in\mathcal{X}_3}F_\Phi(\xi^*;\boldsymbol{x}_i) \leq \min\limits_{\boldsymbol{x}_i\in\mathcal{X}_1}F_\Phi(\xi^*;\boldsymbol{x}_i)$; and
    \item[(c)] If $\mathcal{X}_2$ is nonempty, then there is a number $s$ so that for every $\boldsymbol{x}_i\in\mathcal{X}_2$
    \begin{itemize}
        \item[(i)] $\nu_i< w_i< \mu_i$;
        \item[(ii)] $F_\Phi(\xi^*;\boldsymbol{x}_i)=s$; and 
        \item[(iii)] $\max\limits_{\boldsymbol{x}_i\in\mathcal{X}_3}F_\Phi(\xi^*;\boldsymbol{x}_i) \leq s \leq \min\limits_{\boldsymbol{x}_i\in\mathcal{X}_1}F_\Phi(\xi^*;\boldsymbol{x}_i)$.
    \end{itemize} 

    \end{itemize}
\end{itemize}

\end{theorem}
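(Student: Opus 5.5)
The plan is to reduce optimality over $\Xi$ to a first-order condition along feasible directions, and then turn that condition into a threshold structure on the values $d_i := F_\Phi(\xi^*;\bx_i)$. Since $\Xi$ is convex and $\Phi$ is convex, $\xi^*$ is $\Phi$-optimal in $\Xi$ if and only if $F_\Phi(\xi^*,\eta)\ge 0$ for every $\eta\in\Xi$. Necessity is immediate from the definition of the directional derivative. For sufficiency, convexity gives $\Phi(\eta)-\Phi(\xi^*)\ge F_\Phi(\xi^*,\eta)$, because $\alpha\mapsto\Phi((1-\alpha)\xi^*+\alpha\eta)$ is convex on $[0,1]$. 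By linear differentiability, $F_\Phi(\xi^*,\xi^*)=0$, so $\sum_i w_i d_i=0$, and
$$F_\Phi(\xi^*,\eta)=\sum_{i=1}^N (\lambda_i-w_i)\,d_i .$$
Optimality is therefore equivalent to the statement that $w$ minimizes the linear functional $\lambda\mapsto\sum_i\lambda_i d_i$ over the polytope $\{\nu_i\le\lambda_i\le\mu_i,\ \sum_i\lambda_i=1\}$. This is a continuous knapsack problem, and the rest of the proof characterizes its minimizers.

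(II)$\Rightarrow$(I). Take any feasible $\lambda$ and set $\Delta_i=\lambda_i-w_i$, so that $\sum_i\Delta_i=0$. Feasibility forces $\Delta_i\ge 0$ on $\mathcal{X}_1$ and $\Delta_i\le 0$ on $\mathcal{X}_3$.
\begin{itemize}
\item If $\mathcal{X}_2$ is nonempty, write $\sum_i\Delta_i d_i=\sum_i\Delta_i(d_i-s)$. The terms on $\mathcal{X}_2$ vanish, the terms on $\mathcal{X}_1$ are products of two nonnegative factors, and the terms on $\mathcal{X}_3$ are products of two nonpositive factors. Hence the sum is $\ge 0$.
\item If $\mathcal{X}_2$ is empty, use the same argument with any $s$ lying between $\max_{\mathcal{X}_3}d_i$ and $\min_{\mathcal{X}_1}d_i$; condition (b) guarantees such an $s$ exists.
\end{itemize}

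(I)$\Rightarrow$(II). Define $\mathcal{X}_1=\{i:w_i=\nu_i\}$, $\mathcal{X}_3=\{i:w_i=\mu_i\}$, and $\mathcal{X}_2$ as the rest. These sets are disjoint because $\nu_i<\mu_i$, so (a) and (c)(i) hold by construction. The key step is a pairwise exchange argument. Suppose $w_j>\nu_j$ and $w_k<\mu_k$. Moving a small mass $\epsilon$ from $j$ to $k$ stays in $\Xi$ and changes the criterion at first order by $\epsilon(d_k-d_j)$, so optimality forces $d_j\le d_k$. Any index with $w_j>\nu_j$ lies in $\mathcal{X}_2\cup\mathcal{X}_3$, and any index with $w_k<\mu_k$ lies in $\mathcal{X}_1\cup\mathcal{X}_2$. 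Applying the exchange to each pair gives the following.
\begin{itemize}
\item For $i,i'\in\mathcal{X}_2$, both exchanges are allowed, so $d_i=d_{i'}$. Call this common value $s$.
\item For $i\in\mathcal{X}_3$ and $i'\in\mathcal{X}_2$, we get $d_i\le s$, and symmetrically $s\le d_{i'}$ for $i'\in\mathcal{X}_1$.
\item For $i\in\mathcal{X}_3$ and $i'\in\mathcal{X}_1$, we get $d_i\le d_{i'}$.
\end{itemize}
These give (c)(ii), (c)(iii) and (b).

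The main obstacle is justifying the first-order reduction rigorously. We need $\Phi$ to be finite at $\xi^*$, that is, a nonsingular information matrix, so that directional derivatives exist. We also need the step from $F_\Phi(\xi^*,\eta)$ for a general $\eta$ to the single-point formula, where $\eta$ assigns weight $\lambda_i$ to $\bx_i$. Here the standing assumptions on $\Lambda$ supply exactly what is needed: linear differentiability gives the identity above, and differentiability in $\gamma$ along $\xi^*+\gamma(\xi_2-\xi_3)$ legitimizes the pairwise-exchange derivative. That derivative is $d_k-d_j$, using $\xi_2=\delta_{\bx_k}$ and $\xi_3=\delta_{\bx_j}$ and subtracting the two one-point directional derivatives.

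For the degenerate cases, a design with singular information has $\Phi=+\infty$ and cannot be optimal when some feasible design is nonsingular, which we assume. If $\mathcal{X}_1$ or $\mathcal{X}_3$ is empty, the corresponding max or min is vacuous and those conditions hold trivially.
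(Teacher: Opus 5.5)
Your proof is correct and follows essentially the same route as the paper. It uses the first-order characterization via convexity and linear differentiability (with $F_\Phi(\xi^*,\xi^*)=0$), pairwise mass exchanges among $\mathcal{X}_1,\mathcal{X}_2,\mathcal{X}_3$ for (I)$\Rightarrow$(II), and the termwise bound on $\sum_i(\lambda_i-w_i)F_\Phi(\xi^*;\bx_i)$ for (II)$\Rightarrow$(I); your choice of $s$ between $\max_{\mathcal{X}_3}d_i$ and $\min_{\mathcal{X}_1}d_i$ when $\mathcal{X}_2=\emptyset$ makes explicit a point the paper's sufficiency computation leaves implicit.
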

A proof of Theorem~\ref{eq_thm}, inspired by the proof in \cite{Sahm2001ADesigns}, is given in the supplementary material. It can be shown that if (I) in Theorem~\ref{eq_thm} holds, then the value of $s$ in (II)(c) is given by
    \begin{eqnarray}
   s=\frac{-\sum_{\boldsymbol{x}_i\in\mathcal{X}_1}F_\Phi(\xi^*;\boldsymbol{x}_i)\nu_i-\sum_{\boldsymbol{x}_i\in\mathcal{X}_3}F_\Phi(\xi^*;\boldsymbol{x}_i)\mu_i}{1-(\sum_{\boldsymbol{x}_i\in\mathcal{X}_1}\nu_i+\sum_{\boldsymbol{x}_i\in\mathcal{X}_3}\mu_i)}.
\end{eqnarray} 
As already noted, for the subdata selection problem, we have $\nu_i=0$ and $\mu_i=1/n$, $i=1,\ldots,N$.
 
\subsection{Identifying an optimal bounded approximate design: The main algorithm}
As observed just prior to Subsection~\ref{subsec:ET}, the proposed methodology depends on identifying a $\Phi$-optimal bounded approximate design $\xi^*$. Theorem~\ref{eq_thm}, with $\nu_i=0$ and $\mu_i=1/n$, $i=1,\ldots, N$, will guide us in finding such a design via an iterative algorithm. 

For an index $t$, during the algorithm, we will have a partition $\mathcal{P}^{(t)}=(\mathcal{X}^{(t)}_1, \mathcal{X}^{(t)}_2, \mathcal{X}^{(t)}_3)$ of the design space $\mathcal{X}$ with weights $w_i^{(t)}$. As in Theorem~\ref{eq_thm}, $\mathcal{X}^{(t)}_1$ and $\mathcal{X}^{(t)}_3$ contain the points in $\mathcal{X}$ with weights 0 and $1/n$, respectively, for a bounded approximate design $\xi_{\mathcal{P}^{(t)}}$ at that stage of the algorithm, while $\mathcal{X}^{(t)}_2$ contains the points, if any, with optimally selected weights that are strictly between these two extremes. The following lemma provides an interesting property for these types of designs.

\begin{lemma}\label{lemma1}
For $\Phi \in \Lambda$, consider a design $\xi_\mathcal{P}$ corresponding to a partition $ \mathcal{P} = (\mathcal{X}_1, \mathcal{X}_2, \mathcal{X}_3)$ in which the weights for points in $\mathcal{X}_2$, if any, have been optimized and are strictly between 0 and $1/n$. If $\mathcal{X}_2\neq \emptyset$, then $F_{\Phi}(\xi_{\mathcal{P}};\bx_i)$ is constant for $\bx_i \in \mathcal{X}_2$.
\end{lemma}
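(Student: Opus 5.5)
The plan is to read ``the weights for points in $\mathcal{X}_2$ have been optimized'' as follows. With the weights on $\mathcal{X}_1$ fixed at $0$ and those on $\mathcal{X}_3$ fixed at $1/n$, the weights $\{w_i:\bx_i\in\mathcal{X}_2\}$ minimize $\Phi(\xi)$ subject to $\sum_{\bx_i\in\mathcal{X}_2} w_i = 1-|\mathcal{X}_3|/n$. By hypothesis the minimizer lies strictly inside the box $(0,1/n)^{|\mathcal{X}_2|}$. The argument is then a first-order optimality argument inside the relative interior of this affine slice, using weight-exchange perturbations between pairs of points of $\mathcal{X}_2$.

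Concretely, I would take two points $\bx_j,\bx_k\in\mathcal{X}_2$ (if $|\mathcal{X}_2|=1$ the claim is trivial). For small $\gamma$ of either sign, define
\[
\xi_\gamma=\xi_{\mathcal{P}}+\gamma(\delta_{\bx_j}-\delta_{\bx_k}).
\]
Because $0<w_j,w_k<1/n$, this is a feasible design with the same partition constraints for all $|\gamma|$ small. By optimality of the $\mathcal{X}_2$ weights, $g(\gamma)=\Phi(\xi_\gamma)$ has a local minimum at $\gamma=0$. By the smoothness assumption in $\Lambda$, $g$ is differentiable there, so $g'(0)=0$.

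Next I would compute $g'(0)$ in terms of the directional derivatives. One can write $\xi_\gamma=(1-\alpha)\xi_{\mathcal{P}}+\alpha\eta$ with $\alpha=\gamma/w_k$ (for $\gamma>0$) and
\[
\eta=\xi_{\mathcal{P}}+w_k(\delta_{\bx_j}-\delta_{\bx_k})/\,\cdot\,,
\]
or more simply use linear differentiability. Since $\Phi$ depends on $\xi$ only through $M(\xi)=\sum_i w_iI_{\bt}(\bx_i)$ and is differentiable, the derivative of $\Phi$ along $M(\xi_{\mathcal{P}})+\gamma D$ is a linear functional of $D$. Hence
\[
g'(0)=F_\Phi(\xi_{\mathcal{P}};\bx_j)-F_\Phi(\xi_{\mathcal{P}};\bx_k),
\]
because $F_\Phi(\xi;\bx)$ equals the derivative in direction $I_{\bt}(\bx)-M(\xi)$, and the $M(\xi)$ terms cancel. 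Together with $g'(0)=0$ this gives $F_\Phi(\xi_{\mathcal{P}};\bx_j)=F_\Phi(\xi_{\mathcal{P}};\bx_k)$. Since the pair was arbitrary, $F_\Phi$ is constant on $\mathcal{X}_2$.

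The main obstacle is justifying the identity for $g'(0)$ rigorously from the paper's stated conditions rather than from explicit formulas. The one-sided derivative for $\gamma>0$ follows from the linear-differentiability definition, taking $\eta$ to be the design that moves all of $w_k$ to $\bx_j$ and scaling appropriately. The case $\gamma<0$ follows by symmetry, swapping $j$ and $k$. Alternatively, for $A$- and $D$-optimality one can verify it directly using
\[
F_\Phi(\xi;\bx)=p_1-\operatorname{Tr}\bigl(M^{-1}I_{\bt}(\bx)\bigr)
\]
and its $A$-analogue. Either route requires $M(\xi_{\mathcal{P}})$ to be nonsingular, which I would assume, as the paper implicitly does.
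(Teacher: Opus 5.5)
Your proof is correct and is essentially the paper's argument. The paper parametrizes the $\mathcal{X}_2$ weights by $n_2-1$ free coordinates, with the last weight absorbing the sum constraint, and cites Theorem~4 of Yang, Biedermann and Tang to conclude that an interior optimum is a critical point; it then identifies $\partial\Phi/\partial w_i$ with $F_\Phi(\xi_{\mathcal{P}};\bx_i)-F_\Phi(\xi_{\mathcal{P}};\bx_{n_2})$, which is exactly your exchange derivative with $\bx_k=\bx_{n_2}$. (Your garbled definition of $\eta$ should read $\eta=\xi_{\mathcal{P}}+w_k(\delta_{\bx_j}-\delta_{\bx_k})$ with $\alpha=\gamma/w_k$; together with linear differentiability and $F_\Phi(\xi_{\mathcal{P}},\xi_{\mathcal{P}})=0$, this gives the one-sided derivative $F_\Phi(\xi_{\mathcal{P}};\bx_j)-F_\Phi(\xi_{\mathcal{P}};\bx_k)$ that you need.)
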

Note that part (c) of Theorem~\ref{eq_thm} makes this statement already for a $\Phi$-optimal design $\xi^*$.

We present pseudocode for finding a $\Phi$-optimal design in Algorithm~\ref{alg:main}. As shown in Algorithm~\ref{alg:main}, we first search for a reasonable starting design based on subdata, followed by an iteration step to find an optimal bounded approximate design. During the iteration, optimal weights must be computed for points in $\mathcal{X}_2^{(t)}$ (see line~\ref{alg:optw}). We will present an efficient algorithm for this step in Subsection~\ref{implementation}. It is also worth pointing out that the proofs will show that the iteration procedure of Algorithm~\ref{alg:main} results in a better design if conditions (II)(b) or (II)(c) of Theorem~\ref{eq_thm} are not satisfied. That is, in that case $\Phi(\mathcal{P}^{(t+1)}) < \Phi(\mathcal{P}^{(t)})$.

\begin{algorithm}[H]
\caption{Main Algorithm}\label{alg:main}
\begin{small}
\begin{algorithmic}[1]
\State \textbf{Input}: $\bx_i$, $i=1,\ldots,N$; subdata size $n$; optimality criterion $\Phi \in \Lambda$; information matrix $I_{\bt}(\bx_i)$ of dimension $p_1$
\State \textbf{Output}: Optimal bounded approximate design $\xi^*$

\Procedure{Preparation}{}
  \State Using IBOSS \citep{Wang2019Information-BasedRegression}, select subdata $S$ of size $n$ and let $\xi_{S}$ be the approximate design as in Eq.~(\ref{psi_design})\label{alg:startS} 
  \State Compute $F_{\Phi}(\xi_{S},\bx_i)$, $i=1,\ldots,N$ \label{alg:comp}
  \State Form $E_1$ consisting of $\lfloor n/p_1\rfloor$ points $\bx_i \not \in S$ with the smallest values for $F_{\Phi}(\xi_{S},\bx_i)$
  \State Form $E_3$ consisting of $\lfloor n/p_1\rfloor$ points $\bx_i \in S$ with the largest values for $F_{\Phi}(\xi_{S},\bx_i)$
  \State Assign $S = (S \backslash E_3) \cup E_1$\label{alg:assign}
  \State Repeat lines~\ref{alg:comp}--\ref{alg:assign} $p_1$ times\label{alg:finalS}
  \State Obtain $\bx_{1*}=\argmin_{\bx_i\not\in S}F_{\Phi}(\xi_{S},\bx_i)$ and $\bx_{3*}=\argmax_{\bx_i\in S}F_{\Phi}(\xi_{S},\bx_i)$\label{alg:onepoint}
  \State Assign $S = (S \backslash \{\bx_{3*}\}) \cup \{\bx_{1*}\}$\label{alg:newS}
  \State Repeat lines~\ref{alg:onepoint}--\ref{alg:newS} $n$ times
  \State \label{desP0} Let $\xi_{\mathcal{P}^{(0)}}$ be the approximate design corresponding to the partition $\mathcal{P}^{(0)} = (\mathcal{X}^{(0)}_1, \mathcal{X}^{(0)}_2, \mathcal{X}^{(0)}_3)$, where $\mathcal{X}^{(0)}_1 = \mathcal{X} \backslash S$, $\mathcal{X}^{(0)}_2 = \emptyset$, and $\mathcal{X}^{(0)}_3 = S$
\EndProcedure

\Procedure{Iteration}{}
  \State Set $t=0$
  \State If $\mathcal{X}^{(t)}_2 \not = \emptyset$, compute the optimal weights between $0$ and $1/n$ for points $\bx_i \in \mathcal{X}^{(t)}_2$ and redefine the partition $\mathcal{P}^{(t)}$ if some weights become $0$ (move those points to $\mathcal{X}^{(t)}_1$) or $1/n$ (move those points to $\mathcal{X}^{(t)}_3$)\label{alg:optw}
  \State If conditions (II)(b) and (II)(c) in Theorem~\ref{eq_thm} are satisfied for $\xi_{\mathcal{P}^{(t)}}$, set $\xi^* = \xi_{\mathcal{P}^{(t)}}$ and go to line~\ref{alg:stop}\label{alg:condIIb}
  \State Else, obtain $\bx_{1*}=\argmin_{\bx_i\in \mathcal{X}^{(t)}_1}F_{\Phi}(\xi_{\mathcal{P}^{(t)}},\bx_i)$ and $\bx_{3*}=\argmax_{\bx_i\in \mathcal{X}^{(t)}_3}F_{\Phi}(\xi_{\mathcal{P}^{(t)}},\bx_i)$
  \State Define $\mathcal{X}^{(t+1)}_1=\mathcal{X}^{(t)}_1\setminus \{\bx_{1*}\}$,\;
         $\mathcal{X}^{(t+1)}_2=\mathcal{X}^{(t)}_2\cup \{\bx_{1*},\bx_{3*}\}$,\;
         $\mathcal{X}^{(t+1)}_3=\mathcal{X}^{(t)}_3\setminus \{\bx_{3*}\}$\label{alg:newP}
  \State Set $t = t+1$\label{alg:increase}
  \State Repeat lines~\ref{alg:optw}--\ref{alg:increase} until $\Phi(\xi_\mathcal{P}^{(t)})$ converges\label{alg:conv}
\EndProcedure\label{alg:stop}

\end{algorithmic}
\end{small}
\end{algorithm}

An important property of any algorithm is that it converges. Our algorithm is designed for a general model, arbitrary choices of $N$ and $n$, and multiple optimality criteria. To establish convergence in such a general setting, we introduce an assumption.  Notice that this assumption is only needed in Theorems \ref{convg} and \ref{speed}.

 Let $\Xi$ again denote the set of all bounded approximate designs $\xi=\{(\bx_i,w_i), i=1,\ldots,N\}$, but now for $0\leq w_i\leq 1/n$ and $\sum_{i=1}^Nw_i=1$. Let  
 $\xi \in \Xi$ be a design in the iteration stage of Algorithm~\ref{alg:main},  $\bx_{+}$ and $\bx_-$ be two points with weights $w_+$ and $w_-$ in $\xi$, respectively, so that $w_+ < 1/n$ and $w_- > 0$. For small enough $\gamma>0$, we can then define a new design $\xi[\bx_+, \bx_-,\gamma] \in \Xi$ such that
		\begin{itemize}
			\item[(1)] $w_{i}(\xi[\bx_+, \bx_-,\gamma])=w_i$ for $\bx_i\in \mathcal{X}\setminus \{\bx_+,\bx_-\}$;
			\item[(2)] $w_{i}(\xi[\bx_+, \bx_-,\gamma])=w_{+}+\gamma$ for $x_i =\bx_+$; and
			\item[(3)] $w_{i}(\xi[\bx_+, \bx_-,\gamma])=w_{-}-\gamma$ for $x_i =\bx_-$.
		\end{itemize}

We will make the following assumption.

\textbf{Assumption 1.} For all designs $\xi$ generated during the iteration procedure, it holds for a constant $C>0$ that $\Phi(\xi[\bx_+, \bx_-,\gamma])< C\Phi(\xi)$ for all $\gamma$ with $\xi[\bx_+, \bx_-,\gamma] \in \Xi$. 

\begin{remark} Although this assumption may seem restrictive, it holds under mild regularity conditions. The construction of $\xi[\bx_+, \bx_-,\gamma] $ simply decreases the weight of one point and increases the weight of another (the resulting weights are still bounded between 0 and $1/n$), while leaving all other weights unchanged. 
Since the weight for any point is at most $1/n$, this adjustment will not significantly increase the value $\Phi$, except in extreme cases (for example, when the size of the sub-data $n$ is very small, such as being equal to the dimension $p$ of the feature space). However, in practice, the size $n$ is typically much larger than $p$. Assumption 1 can also be easily verified empirically. In our examples, $\Phi(\xi[\bx_+, \bx_-,\gamma])<1.01 \Phi(\xi)$ holds for all cases.
 \end{remark}
 
The next theorem shows that Algorithm~\ref{alg:main} performs as desired.

\begin{theorem}\label{convg} 
Let $\Phi \in \Lambda$. With $\xi_{\mathcal{P}_0}$ as the bounded approximate design in line~\ref{psi_design} of Algorithm~\ref{alg:main}, if the information matrix for $\xi_{\mathcal{P}^{(0)}}$ is non-singular and Assumption 1 holds, then the $\Phi$-values for the sequence of designs $\xi_{\mathcal{P}^{(t)}}$ converge to the $\Phi$-value of an optimal bounded approximate design as $t$ increases. That is, 
 \begin{equation}
     \lim_{t\to\infty}\Phi(\xi_{\mathcal{P}^{(t)}}) = \min_{\xi\in \Xi}\Phi(\xi).
 \end{equation} 
\end{theorem}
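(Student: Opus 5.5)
The plan is to show that the sequence $\Phi(\xi_{\mathcal{P}^{(t)}})$ is monotone non-increasing and bounded below. Then I identify its limit by a compactness argument combined with Theorem~\ref{eq_thm}.

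\textbf{Step 1 (monotonicity).} Fix $t$. By Lemma~\ref{lemma1}, the points of $\mathcal{X}_2^{(t)}$ share a common value $F_\Phi(\xi_{\mathcal{P}^{(t)}};\bx_i)=s_t$. Suppose (II)(b) or (II)(c) of Theorem~\ref{eq_thm} fails. Then some $\bx$ in $\mathcal{X}_1^{(t)}$ with positive room has a directional derivative strictly smaller than that of some point in $\mathcal{X}_3^{(t)}$ (or smaller than $s_t$, or $s_t$ is smaller than some value in $\mathcal{X}_3^{(t)}$). In particular, $F(\bx_{1*}) < F(\bx_{3*})$, up to comparisons with $s_t$ that are handled in the same way. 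Consider the transfer design $\xi[\bx_{1*},\bx_{3*},\gamma]$. By linear differentiability, its derivative at $\gamma=0^+$ is $F_\Phi(\xi;\bx_{1*})-F_\Phi(\xi;\bx_{3*})<0$. Hence a small transfer strictly decreases $\Phi$. Line~\ref{alg:newP} puts both points into $\mathcal{X}_2^{(t+1)}$, and line~\ref{alg:optw} optimizes the weights over the face containing this transfer. Therefore $\Phi(\xi_{\mathcal{P}^{(t+1)}})\le \Phi(\xi[\bx_{1*},\bx_{3*},\gamma])<\Phi(\xi_{\mathcal{P}^{(t)}})$. Nonsingularity of the initial information matrix keeps every $\Phi$-value finite, and $\Phi$ is bounded below on the compact set $\Xi$. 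So $\Phi_t:=\Phi(\xi_{\mathcal{P}^{(t)}})$ converges to some $\Phi_\infty\ge \min_\Xi\Phi$.

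\textbf{Step 2 (quantitative decrease).} To rule out $\Phi_\infty>\min_\Xi\Phi$, I want a lower bound on the decrease per step. First, let $\delta_t=F_\Phi(\xi_t;\bx_{3*})-F_\Phi(\xi_t;\bx_{1*})\ge 0$ be the violation gap. I will expand $g(\gamma)=\Phi(\xi_t[\bx_{1*},\bx_{3*},\gamma])$ to second order, using the assumed infinite differentiability in $\gamma$: $g(\gamma)\le g(0)-\gamma\delta_t+\tfrac12 M\gamma^2$. Assumption~1 keeps every design reachable along such segments within the sublevel set $\{\Phi\le C\Phi_0\}$. On this compact set the information matrices are uniformly nonsingular, so $g''$ is bounded by a constant $M$. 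Choosing $\gamma=\min(\delta_t/M,\,\text{room})$ then gives $\Phi_t-\Phi_{t+1}\ge c\min(\delta_t^2,\delta_t\cdot\text{room})$. Summability of these decreases forces $\delta_t\to0$ whenever the available room is bounded away from zero. The \emph{main obstacle} is the case where the room is tiny, because the transferred weights are near $0$ or $1/n$. There I would use the finiteness of the partition set: there are only finitely many partitions $\mathcal{P}$, and for each one the optimized design is determined by line~\ref{alg:optw}. Also, strict decrease prevents any partition from repeating with the same value, so the iteration cannot cycle indefinitely without either reaching a design satisfying (II) or producing gaps $\delta_t\to 0$.

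\textbf{Step 3 (identifying the limit).} Take a convergent subsequence $\xi_{t_k}\to\bar\xi$, which exists because $\Xi$ is compact. Pass to a further subsequence along which the partition is constant. The directional derivatives are continuous in $\xi$ on the nonsingular region, and $\delta_{t_k}\to0$. Together these show that $\bar\xi$, with the limiting partition, satisfies conditions (II)(a)--(c) of Theorem~\ref{eq_thm}. Theorem~\ref{eq_thm} then says $\bar\xi$ is $\Phi$-optimal. By continuity, $\Phi_\infty=\Phi(\bar\xi)=\min_{\xi\in\Xi}\Phi(\xi)$. I expect the delicate points to be the uniform bound $M$, which comes from Assumption~1 via uniform nonsingularity, and the handling of degenerate small-room steps.
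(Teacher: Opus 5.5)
The gap is in Step~2, and it carries into Step~3. Your $\delta_t$ measures only the $\bx_{1*}$--$\bx_{3*}$ violation. For that transfer the room is always exactly $1/n$, since one weight is $0$ and the other is $1/n$, so tiny room never arises there. It arises for violations relative to $s_t$ when $\mathcal{X}_2^{(t)}\neq\emptyset$. For example, take $F_\Phi(\xi_t;\bx_{1*})<s_t$ while $F_\Phi(\xi_t;\bx_{3*})\le F_\Phi(\xi_t;\bx_{1*})$. Then $\delta_t\le 0$, yet (II)(c)(iii) fails, and the improving transfer must take weight from a point of $\mathcal{X}_2^{(t)}$, whose weight may be small. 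These $s_t$-gaps are exactly what must vanish for $\bar\xi$ to satisfy (II)(c)(iii) in Step~3. Neither ``handled in the same way'' nor the finiteness remark shows that they vanish. The sentence saying the iteration cannot cycle without reaching (II) or producing $\delta_t\to0$ is not an argument for it. There is also an internal tension: Step~3 extracts a subsequence with constant partition, while your own remark says a partition can never recur.

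Two repairs are available.

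(a) Integrality. Let $n_2,n_3$ be the sizes of $\mathcal{X}_2^{(t)},\mathcal{X}_3^{(t)}$. If $\mathcal{X}_2^{(t)}\neq\emptyset$, its total weight is $k/n$ with $k=n-n_3\ge1$, and each weight is below $1/n$, so $n_2\ge k+1$. Hence the largest weight in $\mathcal{X}_2^{(t)}$ is at least $1/(nN)$ and the smallest is at most $1/n-1/(nN)$. Moving weight from the largest one to $\bx_{1*}$, or from $\bx_{3*}$ to the smallest one, therefore has room at least $1/(nN)$. Your Taylor bound then gives a decrease of at least $\min\{g_t^2/(2M),\,g_t/(2nN)\}$, where $g_t$ is the full violation of (II) including the comparisons with $s_t$. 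So $g_t\to0$, and Step~3 goes through.

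(b) Simpler: make your finiteness remark the whole proof. Assume exact weight optimization in line~\ref{alg:optw}. Then $\Phi_t$ is the minimum of $\Phi$ over the face fixed by the partition entering that line, so it depends only on that partition. Step~1 gives $\Phi_{t+1}<\Phi_t$ whenever (II) fails, so no partition recurs. There are at most $3^N$ partitions, so the algorithm stops after finitely many iterations at a design satisfying (II), which is optimal by Theorem~\ref{eq_thm}. This route uses neither Assumption~1, a curvature bound, nor compactness.

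The paper argues differently. It assumes $\Phi_t-\Phi(\xi^*)>\delta$ for all $t$, and convexity gives $F_\Phi(\xi_t,\xi^*)<-\delta$. Lemma~\ref{lemma2} converts this into a violation of size $\delta$ or $\delta/2$ at every step. The bound $K$ from Assumption~1 and Lemma~\ref{lemma3} then yields a fixed decrease infinitely often, contradicting boundedness; small room is handled via $\liminf/\limsup$ of the extreme $\mathcal{X}_2$-weights. That argument avoids limit points and continuity of $F_\Phi$. Its quantitative decrease is also what Theorem~\ref{speed} reuses for a polynomial bound, which a $3^N$ count cannot give.
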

The result in Theorem~\ref{convg} is desirable, but provides no information about the speed of convergence. Furthermore, strict verification of conditions (II)(b) and (II)(c) in Theorem~\ref{eq_thm} is infeasible due to floating-point errors in computing. Instead, we define a design $\xi$ to be an $\epsilon$-approximation of the optimal design $\xi^*$ if $\Phi(\xi)-\Phi(\xi^*) \le \epsilon$. 
In practice, we aim for an $\epsilon$-approximation of an optimal design for small values of $\epsilon$. We first give a sufficient condition for a design to be an $\epsilon$-approximation of a $\Phi$-optimal design.

\begin{theorem}\label{eps_equiv}
Let $\Xi$ be the set of all bounded approximate designs $\{(\boldsymbol{x}_i,w_i), i=1,\ldots,N\}$ with $ 0\leq w_i\leq 1/n$ and $\sum_{i=1}^N w_i=1$, and let $\Phi \in \Lambda$. For $\epsilon>0$, a design $\xi_{\epsilon}=\{(\boldsymbol{x}_i,w_i), i=1,\ldots, N\}$ is an $\epsilon$-approximation of a $\Phi$-optimal design if there is a partition of the design space, $\mathcal{X} = \mathcal{X}_1 \cup \mathcal{X}_2 \cup \mathcal{X}_3$ and 
    \begin{itemize}
        \item[(a)] $w_i=0$ for $\boldsymbol{x}_i\in\mathcal{X}_1$ and $w_i=1/n$ for $\boldsymbol{x}_i\in\mathcal{X}_3$;
    \item[(b)] if $\mathcal{X}_2 = \emptyset$, then $\max\limits_{\boldsymbol{x}_i\in\mathcal{X}_3}F_\Phi(\xi_{\epsilon};\boldsymbol{x}_i)\leq \min\limits_{\boldsymbol{x}_i\in\mathcal{X}_1}F_\Phi(\xi_{\epsilon};\boldsymbol{x}_i) + \epsilon$
    \item[(c)] if $\mathcal{X}_2 \not = \emptyset$, then there is a number $s$ so that for every $\bx_i \in \mathcal{X}_2$
    \begin{itemize}
        \item[(i)] $0 < w_i < 1/n$;
        \item[(ii)] $s-\frac{\epsilon}{2} < F_{\Phi}(\xi_{\epsilon}; \bx_i) < s + \frac{\epsilon}{2}$;
        \item[(iii)] $\max\limits_{\boldsymbol{x}_i\in\mathcal{X}_3}F_\Phi(\xi_{\epsilon};\boldsymbol{x}_i)\leq s + \frac{\epsilon}{2} \text{ and } s-\frac{\epsilon}{2} \leq \min\limits_{\boldsymbol{x}_i\in\mathcal{X}_1}F_\Phi(\xi_{\epsilon};\boldsymbol{x}_i)$.
    \end{itemize}
    \end{itemize}
\end{theorem}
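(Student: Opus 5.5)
Because $\Phi$ is convex, the plan is a first-order argument. For $\alpha\in(0,1]$ the map $\alpha\mapsto\Phi((1-\alpha)\xi_\epsilon+\alpha\xi^*)$ is convex, which gives
\[
\Phi(\xi^*)-\Phi(\xi_\epsilon)\ \ge\ F_\Phi(\xi_\epsilon,\xi^*).
\]
By linear differentiability, $F_\Phi(\xi_\epsilon,\xi^*)=\sum_i w_i^*F_\Phi(\xi_\epsilon;\bx_i)$, where $w_i^*$ are the weights of $\xi^*$. Also $F_\Phi(\xi_\epsilon,\xi_\epsilon)=0$, so $\sum_i w_iF_\Phi(\xi_\epsilon;\bx_i)=0$. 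Writing $d_i=F_\Phi(\xi_\epsilon;\bx_i)$, we therefore get
\[
\Phi(\xi_\epsilon)-\Phi(\xi^*)\le \sum_i (w_i-w_i^*)\,d_i .
\]
It then suffices to show that the right-hand side is at most $\epsilon$.

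Let $c$ be a constant to be chosen. Since $\sum_i(w_i-w_i^*)=0$, we can write $\sum_i(w_i-w_i^*)d_i=\sum_i(w_i-w_i^*)(d_i-c)$ and split the sum over the three parts of the partition. The signs work as follows:
\begin{itemize}
\item On $\mathcal{X}_1$, $w_i=0$, so $w_i-w_i^*\le 0$.
\item On $\mathcal{X}_3$, $w_i=1/n\ge w_i^*$, so $w_i-w_i^*\ge 0$.
\item On $\mathcal{X}_2$ the sign is arbitrary.
\end{itemize}

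\emph{Case $\mathcal{X}_2=\emptyset$.} Take $c=\min_{\mathcal{X}_1}d_i$. The $\mathcal{X}_1$ terms are $\le 0$. On $\mathcal{X}_3$, condition (b) gives $d_i-c\le\epsilon$, so that part contributes at most $\epsilon\sum_{\mathcal{X}_3}(w_i-w_i^*)$. This sum is at most $\sum_{\mathcal{X}_3}w_i\le 1$, so the total is at most $\epsilon$.

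\emph{Case $\mathcal{X}_2\neq\emptyset$.} Take $c=s$. Condition (iii) gives $d_i-s\ge-\epsilon/2$ on $\mathcal{X}_1$ and $d_i-s\le\epsilon/2$ on $\mathcal{X}_3$; condition (ii) gives $|d_i-s|<\epsilon/2$ on $\mathcal{X}_2$. Each term is then bounded by $\frac{\epsilon}{2}|w_i-w_i^*|$, so the total is at most $\frac{\epsilon}{2}\sum_i|w_i-w_i^*|\le\frac{\epsilon}{2}\cdot 2=\epsilon$, since both weight vectors sum to one.

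The main obstacle is justifying the first-order inequality: the directional derivative at $\xi_\epsilon$ must exist and be finite. This is where I would use that $\Phi\in\Lambda$ is convex and differentiable along segments between designs. I would also need $\xi_\epsilon$ to have a nonsingular information matrix so that $\Phi(\xi_\epsilon)<\infty$, which is implicit in the hypotheses since the $F_\Phi(\xi_\epsilon;\bx_i)$ are assumed finite. The identity $\sum_i w_iF_\Phi(\xi_\epsilon;\bx_i)=0$ follows directly from linear differentiability applied with $\eta=\xi_\epsilon$.
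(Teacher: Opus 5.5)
Your proposal is correct and follows essentially the same route as the paper. The common structure is the convexity bound $F_\Phi(\xi_\epsilon,\xi)\le\Phi(\xi)-\Phi(\xi_\epsilon)$, linear differentiability with $F_\Phi(\xi_\epsilon,\xi_\epsilon)=0$, and a split on whether $\mathcal{X}_2$ is empty; each case then uses the sign constraints on $\mathcal{X}_1$ and $\mathcal{X}_3$ and the bound $\frac{\epsilon}{2}\sum_i|w_i-w_i(\xi)|\le\epsilon$. Your recentering by the constant $c$ is a tidier way of writing the paper's lower bound on $F_\Phi(\xi_\epsilon,\xi)$, which the paper proves for an arbitrary competing $\xi$ rather than only for $\xi^*$.
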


The following theorem provides a result for the computational complexity of Algorithm~\ref{alg:main} for finding a design $\xi$ that is an $\epsilon$-approximation of an optimal design $\xi^*$ under $\Phi$.

\begin{theorem}\label{speed}
Let $\Phi \in \Lambda$  and Assumption 1 holds. For $\epsilon>0$, if conditions (II)(b) and II(c) of Theorem~\ref{eq_thm} in line~\ref{alg:condIIb} of Algorithm~\ref{alg:main} are replaced by (b) and (c) of Theorem \ref{eps_equiv},
then Algorithm~\ref{alg:main} is a polynomial time algorithm for finding an $\epsilon$-approximation $\xi_{\mathcal{P}^{(t)}}$ of $\xi^*$. 
\end{theorem}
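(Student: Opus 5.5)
The plan is to show two things: each iteration of Algorithm~\ref{alg:main} costs polynomial time, and the number of iterations before the relaxed stopping rule of Theorem~\ref{eps_equiv} fires is bounded by a polynomial in $N$, $n$, $p_1$ and $1/\epsilon$.

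\emph{Per-iteration cost.} Computing $F_\Phi(\xi_{\mathcal{P}^{(t)}};\bx_i)$ for all $i$ requires one $p_1\times p_1$ inversion of the information matrix plus $N$ quadratic or trace forms, i.e.\ $O(Np_1^2+p_1^3)$ operations. Selecting $\bx_{1*}$ and $\bx_{3*}$ and checking (b)--(c) of Theorem~\ref{eps_equiv} costs $O(N)$. The weight optimization over $\mathcal{X}_2^{(t)}$ in line~\ref{alg:optw} is a convex problem on a box intersected with a hyperplane. Only an $\epsilon$-accurate solution is needed, so it can be solved in polynomial time, for instance by the procedure of Subsection~\ref{implementation} or by any interior-point method. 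The preparation stage consists of $O(p_1+n)$ exchange steps of the same cost, so it is also polynomial.

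\emph{Iteration bound.} I would prove a uniform sufficient-decrease inequality: whenever (b) or (c) of Theorem~\ref{eps_equiv} fails at $\xi=\xi_{\mathcal{P}^{(t)}}$, then $\Phi(\xi_{\mathcal{P}^{(t+1)}})\le \Phi(\xi_{\mathcal{P}^{(t)}})-\delta(\epsilon)$ for some $\delta(\epsilon)>0$ that is polynomial in $\epsilon$ and $1/C$. Failure means there exist $\bx_+\in\mathcal{X}_1\cup\mathcal{X}_2$ and $\bx_-\in\mathcal{X}_2\cup\mathcal{X}_3$ with $F_\Phi(\xi;\bx_-)-F_\Phi(\xi;\bx_+)>\epsilon/2$. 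In particular the pair $(\bx_{1*},\bx_{3*})$ chosen by the algorithm, or a violating pair inside $\mathcal{X}_2$, gives such a gap. By linear differentiability, the derivative in $\gamma$ at $\gamma=0$ of $\Phi(\xi[\bx_+,\bx_-,\gamma])$ equals $F_\Phi(\xi;\bx_+)-F_\Phi(\xi;\bx_-)<-\epsilon/2$. Because $\mathcal{X}_2^{(t+1)}$ contains both points and its weights are optimized, $\Phi(\xi_{\mathcal{P}^{(t+1)}})\le\min_\gamma\Phi(\xi[\bx_+,\bx_-,\gamma])$. The smoothness property of $\Lambda$ gives a Taylor expansion with second-order remainder $\tfrac12\gamma^2 M$, where $M$ bounds the second derivative along such segments. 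Choosing $\gamma=\min\{1/n,\epsilon/(2M)\}$, with $1/n$ being admissible because the moved point goes from weight $1/n$ to $0$, yields a decrease of at least $\min\{\epsilon/(4n),\epsilon^2/(8M)\}$. Since $\Phi$ is monotone along the iteration, the number of iterations is then at most $(\Phi(\xi_{\mathcal{P}^{(0)}})-\Phi(\xi^*))/\delta(\epsilon)$, which is polynomial once $M$ and the initial gap are controlled.

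\emph{Main obstacle.} The hard step is bounding $M$ uniformly over the iterates, and this is where Assumption 1 is used. For $D$- and $A$-optimality (and $\Phi_p$), the second derivative of $\Phi$ along $\xi[\bx_+,\bx_-,\gamma]$ is a polynomial expression in $I(\xi_\gamma)^{-1}$ and the rank-two difference $I_{\bt}(\bx_+)-I_{\bt}(\bx_-)$. Assumption 1 keeps $\Phi(\xi_\gamma)<C\Phi(\xi)\le C\Phi(\xi_{\mathcal{P}^{(0)}})$ along the segment. This bound on $\Phi$ prevents the smallest eigenvalue of $I(\xi_\gamma)$ from collapsing, via $\lambda_{\min}\ge$ a function of $\Phi$ and $\max_i\|I_{\bt}(\bx_i)\|$. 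Hence $\|I(\xi_\gamma)^{-1}\|$, and therefore $M$, is bounded polynomially in $C$, $\Phi(\xi_{\mathcal{P}^{(0)}})$ and $\max_i\|I_{\bt}(\bx_i)\|$. Nonsingularity of the starting design, provided by the preparation stage, makes $\Phi(\xi_{\mathcal{P}^{(0)}})$ finite. If this direct eigenvalue argument proves awkward for general $\Phi\in\Lambda$, I would fall back on proving the bound separately for $\Phi_p$ using explicit derivative formulas. Combining the per-iteration cost with the iteration count then gives the polynomial-time claim.
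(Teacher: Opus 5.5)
Your overall structure matches the paper's proof. You get a uniform per-iteration decrease from a second-order Taylor expansion of $\Phi(\xi[\bx_+,\bx_-,\gamma])$, with the curvature bound supplied by Assumption~1 (the paper's Lemma~3 and constant $K$). You then bound the iteration count by $\Phi(\xi_{\mathcal{P}^{(0)}})-\Phi(\xi^*)$ divided by that decrease.

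The step that fails is the step size. You claim $\gamma=1/n$ is always admissible, and that a violation is witnessed either by $(\bx_{1*},\bx_{3*})$ or by a pair inside $\mathcal{X}_2$. That dichotomy is false when $\mathcal{X}_2^{(t)}\neq\emptyset$, because condition (c)(iii) of Theorem~\ref{eps_equiv} can fail on one side only. For example, take $F_\Phi(\xi;\bx_{1*})<s-\epsilon/2$, $F_\Phi(\xi;\bx_{3*})\le F_\Phi(\xi;\bx_{1*})+\epsilon/2$, and $F_\Phi(\xi;\cdot)=s$ on $\mathcal{X}_2^{(t)}$ (Lemma~\ref{lemma1}). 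Then the only descent directions are mixed pairs $(\bx_{1*},\bx_2)$ or $(\bx_2,\bx_{3*})$ with $\bx_2\in\mathcal{X}_2^{(t)}$. For these the admissible range is $\gamma\le w_{\bx_2}$, respectively $\gamma\le 1/n-w_{\bx_2}$. For an arbitrary $\bx_2$ this can be arbitrarily small. So your decrease $\min\{\epsilon/(4n),\epsilon^2/(8M)\}$ is not justified, and as written you get no uniform decrease at all. Reading ``inside $\mathcal{X}_2$'' as ``involving $\mathcal{X}_2$'' does not rescue the claim. A violating pair strictly inside $\mathcal{X}_2$ has the same problem, though with exactly optimized weights it cannot occur.

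The missing idea, which the paper uses, is to exploit the fact that every point of $\mathcal{X}_2^{(t)}$ has the same derivative value $s$. This lets you pick the partner with the most room: the largest-weight point of $\mathcal{X}_2^{(t)}$ when moving mass to $\bx_{1*}$, and the smallest-weight point when moving mass away from $\bx_{3*}$. The $n_2$ weights on $\mathcal{X}_2^{(t)}$ lie strictly in $(0,1/n)$ and sum to $(n-n_3)/n$. That sum is a positive multiple of $1/n$ and is strictly less than $n_2/n$, so their average lies in $[1/(nn_2),\,1/n-1/(nn_2)]$. Hence $w_{\max}\ge 1/(nN)$ and $1/n-w_{\min}\ge 1/(nN)$.

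Taking $\gamma=\min\{1/(nN),\epsilon/(2M)\}$ then gives a decrease of at least $\min\{\epsilon/(4nN),\epsilon^2/(8M)\}$. This matches the paper's final per-step bound and is still polynomial in $n$, $N$ and $1/\epsilon$. With this case analysis added, the rest of your argument goes through.
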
 

\begin{remark} \label{rem:1}
    Rather than using Algorithm~\ref{alg:main} as presented, we make the substitutions suggested by Theorem~\ref{speed} and stop the algorithm when an $\epsilon$-approximation of $\xi^*$ has been found. The proof of Theorem~\ref{speed} also guarantees that, for a fixed $\epsilon$, in the worst case, the computational complexity of Algorithm~\ref{alg:main} is $O(nN^2p^2)$. Our numerical examples require relatively few iterations for the identification of an $\epsilon$-approximation of $\xi^*$.
\end{remark}


\begin{remark} \label{rem:2}
    Since $\Phi(\xi)$ is not scale-invariant, it may be unclear how to select a meaningful value for $\epsilon$ in Theorem~\ref{speed}. Alternatively and easier to interpret, we could consider the relative distance in $\Phi$-values for a design $\xi$ and the $\Phi$-optimal design $\xi^*$, namely $\left( \Phi(\xi) - \Phi(\xi^*) \right) / \Phi(\xi^*)$. We may think of this ratio as the reciprocal of the relative $\Phi$-efficiency of $\xi$ minus 1. To demonstrate how Theorem~\ref{speed} can help to find a design that achieves a desired relative distance to a $\Phi$-optimal design, suppose that we wish to find a design $\xi$ with $\left( \Phi(\xi) - \Phi(\xi^*) \right) / \Phi(\xi^*) \le \alpha$ for some specified $\alpha > 0$. At any stage of the algorithm, if $\xi_0$ is a candidate design, use $\epsilon = \alpha \Phi(\xi_0)/ (\alpha + 1)$ in Theorem~\ref{speed}. With this $\epsilon$, we see that if $\Phi(\xi_0) - \Phi(\xi^*) \le \epsilon$, then
    $$
    \frac{\Phi(\xi_0) - \Phi(\xi^*)}{\Phi(\xi^*)} \le \frac{\epsilon}{\Phi(\xi_0) - \epsilon} = \alpha.
    $$
    Thus, if $\xi_0$ satisfies conditions (b) and (c) of Theorem~\ref{eps_equiv} for this value of $\epsilon$, then it satisfies $\left( \Phi(\xi) - \Phi(\xi^*) \right) / \Phi(\xi^*) \le \alpha$. If it does not satisfy the conditions, we improve $\xi_0$ by continuing to run the algorithm, get a design $\xi_1$, recalculate $\epsilon$, and check the conditions of Theorem~\ref{speed} again, but now with this new value of $\epsilon$. We continue until we find a design and the corresponding value of $\epsilon$ for which the conditions in Theorem~\ref{speed} hold.  
\end{remark} 

\begin{remark} \label{rem:3}
    We will refer to the subdata obtained at the end of line~\ref{alg:finalS} of Algorithm~\ref{alg:main} as IBOSS+ subdata. We will see in Section~\ref{sec:empirical} that the IBOSS+ subdata can already be highly efficient.
\end{remark}

\begin{remark} 
Theorems~\ref{convg} and \ref{speed} provide theoretical guarantees for the main algorithm, but several implementation choices affecting computational efficiency remain. In the Appendix, we address these issues, including the selection of the initial subdata $S$ in line~\ref{alg:startS} and the computation of optimal weights in line~\ref{alg:optw} of Algorithm~\ref{alg:main}.
\end{remark}

\subsection{Subdata selection based on an optimal bounded approximate design}
Once we have found an optimal bounded approximate design $\xi^*$, we need to convert it to subdata $S$ of size $n$. Equivalently, we need to convert it to a bounded approximate design $\xi_S$ with weights that are only 0 and $1/n$ or to an exact design with $n$ distinct points. Although there are algorithms to convert an approximate design to an exact design, such as the rounding algorithm in \cite{Pukelsheim_Rieder_biometrika} and the branch-and-bound algorithm in \cite{Ahipasaoglu_StatComp}, our problem allows for a simpler highly efficient solution. Typically, multiple points reach the upper bound of $1/n$ in $\xi^*$. We include all of these points in the subdata, and add the points with the largest weights until we reach the desired size of $n$.  Thus, the selected subdata consist of the $n$ points with the largest weights in $\xi^*$. Observe that $\xi^*$ has at least $n$ points with a positive weight due to the constraint $0\leq w_i\leq 1/n$.
 
 In the next section, we will see that the lower bound of the efficiency of subdata selected by this method is generally more than 99.99\%.

\section{Simulations and comparisons}\label{sec:empirical} 
The previous sections explained the methodology and derived its theoretical properties. This section will demonstrate its excellent performance. Where appropriate, we will compare our method to the following alternatives: SRS: Simple Random Sampling; ODB: Optimal design
based sampling \citep{DeldossiTommasi2022}; LEV: Leverage sampling \citep{Ma2015ALeveraging};  IBOSS: Information-based optimal subdata selection \citep{Wang2019Information-BasedRegression}; OSS: Orthogonal subsampling \citep{Wang2021OrthogonalRegression}; SEQ: Sequential sampling \citep{Pronzato_Wang_JSPI}; and FO: Using the first-order algorithm to find optimal bounded approximate designs \citep{Ahipasaoglu_StatComp}.  Most of these methods are developed under parameter estimation frameworks and are primarily designed for settings in which all model parameters are of interest.

We refer to our method as IBOSS OBD, which enhances IBOSS using optimal bounded designs. We also propose two related algorithms: IBOSS+, obtained by applying lines~\ref{alg:comp}–\ref{alg:finalS} of Algorithm~\ref{alg:main} after IBOSS, and IBOSS++, which further applies lines~\ref{alg:onepoint}–\ref{desP0}. In contrast to the methods above, the proposed framework is not restricted to targeting all parameters simultaneously and can be applied more flexibly when interest lies in either the full parameter vector or specific components.

IBOSS and OSS are specifically designed for first-order models, and will only be used for those situations. For other linear models or generalized linear models, all other models will be considered. We will also look at simulated data from a clusterwise linear model, where we can use SRS, SEQ, IBOSS+, IBOSS++ and IBOSS OBD. Since LEV and FO require a specific form of the information matrix, it is not clear how to use these methods for that model. As in Section~\ref{sec:method}, we use $S^*$ to denote subdata obtained by one of the methods, adding one of the above abbreviations as a subscript to identify the method.

The SEQ method produces a random sample size that is intended to be close to a prespecified target value 
$n$. Since all other methods can be applied for any given sample size, a natural strategy for fair comparison is to first apply SEQ and then use its realized sample size for the remaining methods. However, in practice this approach does not always perform well, as the sample size produced by SEQ can deviate substantially from the target. In such cases, SEQ is excluded from the comparison.  Additional results illustrating this issue are provided in Appendices \ref{add_simu_2} and \ref{add_simu_3}.

For the FO algorithm, the description in \cite{Ahipasaoglu_StatComp} does not provide sufficient guidance on selecting an initial design or determining the maximum number of iterations.  We investigate these issues in Appendix \ref{about_FO}. We also considered the PGD method (projected gradient descent–based sampling) \citep{WangYuSingh_JMLR2017}. However, due to its scalability limitations (see Appendix \ref{about_PGD}), we exclude PGD from our simulation comparisons.

We consider three models: a linear first-order regression model, a full second-order logistic regression model, and a clusterwise linear model. The feature vectors, $\bx_i$, $i=1,\ldots,N$, are generated from the multivariate normal distribution $N_p(\boldsymbol{1}_p, \Sigma_p)$, where the diagonal and off-diagonal elements of $\Sigma_p$ are 1 and 0.5, respectively. For each scenario, results are based on averaging over 100 simulated data sets (that is, 100 repetitions), each time generated according to the scenario under consideration. Because of space limitation, we only report the result of Secenario (i) in the main text.The results of Scenarios (ii) and (iii) are reported in Appendix \ref{add_simu_2} and \ref{add_simu_3}. 

All coding was done in Julia, and computations were performed on a Dell Laptop (2.3 GHz and 16 GB RAM). The $\epsilon$ in Theorem \ref{eps_equiv} is set to be $10^{-5}$ for all examples. For each configuration, the reported results are based on 100 replications. While we will report the computing times (in seconds) of the various methods, it should be noted that there could be tricks for some methods that we are unaware of. In particular, the OSS method, which is only a candidate for Scenario (i), shows larger computing times than we expected. It is possible that more efficient coding could result in shorter computing times.

{\bf Scenario (i): A linear first-order model for $\boldsymbol{p=10}$.} 
Under the $D$-optimality criterion, the efficiency ratio $\Phi(\xi^*)/\Phi(S_{IBOSS\ BOD}^*)$ is 99.999\%. Since this represents the lower bound for the efficiency of $S_{IBOSS\ BOD}^*$ (as given in Equation~\ref{eff_bounds}), we treat it as the true efficiency. Consequently, we report this ratio as the efficiency metric for the subdata obtained by other methods as well.

We first consider $D$-optimality when all parameters are of interest. From Table~\ref{linear_main_D}, LEV shows a modest improvement over SRS, although both methods remain considerably less efficient than ODB.  IBOSS outperforms ODB, and OSS further improves upon IBOSS. SEQ and IBOSS+ also demonstrate high efficiency, whereas IBOSS++, IBOSS OBD, and FO achieve near-optimal performance. In terms of computation time, SRS is the fastest, followed by IBOSS and LEV. IBOSS+ requires slightly more time than LEV, followed by SEQ, while IBOSS++ is slower. Among the most efficient methods, IBOSS OBD is substantially faster than FO, although both achieve a 100\% convergence rate.  In contrast, ODB exhibits a less favorable performance profile, with lower efficiency than several competing methods and considerably higher computational cost. 
The average subdata size is close to the target of $n = 1000$. 

\begin{table}
\caption{Scenario (i): $D$-optimality for all parameters ($N=100000$ and $n=1000$)}
\centering
\resizebox{\linewidth}{!}{%
\begin{tabular}{lcccccccccc}
\hline
Algorithm & SRS &ODB & LEV & IBOSS & OSS & SEQ & FO & IBOSS+ & IBOSS++ & IBOSS OBD \\
\hline
Mean Eff (\%) & 41.82 & 61.84 & 42.06 & 72.33 & 83.69 & 96.49 & 100.00 & 99.67 & 100.00 & 100.00 \\
Std Eff (\%)  & 0.60 &0.69 & 0.57  & 0.54  & 2.42  & 0.15   & 0.00 & 0.04  & 0.00  & 0.00  \\
\hline
Mean Time (s) & 0.0001 &1.9365 & 0.0465 & 0.0340 & 2.6261 & 0.2428 & 5.0611 & 0.1455 & 0.3385 & 0.6239 \\
Std Time (s)  & 0.0000 &0.6099 &0.0087 & 0.0096 & 0.3287 & 0.0518 & 0.6286 & 0.01833 & 0.0405 & 0.1604 \\
\hline
\multicolumn{10}{l}{\footnotesize Mean(size) = 1021.59, Std(size) = 9.94; convergence rates of IBOSS OBD and FO are both 1.00.} \\
\hline
\end{tabular}%
}
\label{linear_main_D}
\end{table}

 We next consider A-optimality when only the first five parameters are of interest. The results in Table~\ref{linear_main_A} closely resemble those in Table~\ref{linear_main_D}, but with two notable differences. First, the proposed methods continue to achieve high efficiency, whereas the competing methods are substantially less efficient than in the D-optimality setting, particularly SEQ and FO, which are no longer effective. This is expected, as most competing methods are designed for settings in which all model parameters are of interest. In contrast, the proposed approach remains effective when attention is restricted to a subset of parameters. Second, the average computation time for FO increases substantially to about 46 seconds, compared with approximately 5 seconds under D-optimality. In contrast, the computation time for IBOSS OBD remains essentially unchanged, and IBOSS OBD converges in all repetitions.

Since ODB does not provide a competitive balance between statistical efficiency and scalability, it is excluded from further consideration in Scenarios (ii) and (iii) in Appendices  (\ref{add_simu_2}) and \ref{add_simu_3}

\begin{table}
\centering
\caption{Scenario (i): $A$-optimality for the first five slope parameters ($N=100000$ and $n=1000$)}
\resizebox{\linewidth}{!}{%
\begin{tabular}{lcccccccccc}
\hline
Algorithm & SRS  &ODB & LEV & IBOSS & OSS & SEQ & FO & IBOSS+ & IBOSS++ & IBOSS OBD  \\
\hline
Mean Eff (\%) & 27.82 &57.42 & 28.40 & 43.57 & 53.34 & 68.83 & 72.77 & 98.90 & 99.98 & 100.00 \\
Std Eff (\%)  & 0.67  & 1.11& 0.67 &0.71 & 2.19  & 0.67  & 0.43  & 0.15  & 0.01  & 0.00  \\
\hline
Mean Time (s) & 0.0001 &  1.0670 & 0.0463 & 0.0359 & 2.3201 & 0.3707 & 45.9827 & 0.1618&0.3317 & 0.6185 \\
Std Time (s)  & 0.0000 &  0.2100 & 0.0116 & 0.0173 & 0.3350 & 0.0481 & 4.8112 & 0.0245 & 0.0589 &0.1143 \\
\hline
\multicolumn{10}{l}{\footnotesize Mean(size) = 928.36, Std(size) = 66.92; convergence rates: IBOSS OBD = 1.00.} \\
\hline
\end{tabular}%
}
\label{linear_main_A}
\end{table}

\section{Discussion} \label{sec:discussion}
Selecting subdata is essential for analyzing large datasets, especially when computational resources are limited. It is also key for choosing training data in machine learning when labels are unavailable or costly to obtain. However, finding optimal subdata is an NP-hard problem, meaning that practical solutions can only aim for efficiency rather than perfection. While many algorithms exist for efficient subdata selection, their actual effectiveness is not well understood. This paper introduces the IBOSS OBD algorithm, which not only produces highly efficient subdata but also provides a way to measure the efficiency of subdata from other methods. The algorithm searches for bounded optimal approximate designs and is theoretically proven to converge to the true optimal design. Simulations show that it works well across various models, including linear, generalized linear, and more complex ones like clusterwise linear regression.

The computing time of IBOSS OBD increases for more complex models simply because optimization for high-dimensional matrices is a challenging problem. However, empirical evidence suggests that the IBOSS+ method, which provides the starting point for the IBOSS OBD method, provides highly efficient subdata for a fraction of the IBOSS OBD computing time. Thus, IBOSS+ is a very attractive method that yields highly efficient subdata for a wide variety of different models and that generally substantially outperforms methods like SRS,  PGD,  LEV, IBOSS, OSS,  ODB,  SEQ and FO, which appear on top of that not always to be applicable. Note that we can only make the claim about the high efficiency of IBOSS+ because of the development of our IBOSS OBD methodology. 

Most existing approaches are tailored to settings where the full parameter vector is of interest. In contrast, the proposed framework accommodates both full and partial parameter vector. When all parameters are of interest,
the most formidable competitors to IBOSS+, IBOSS++, and IBOSS OBD are SEQ and FO. However, for neither is it clear how to apply them to the clusterwise linear regression model in Scenario (iii). Moreover, even when it converges, FO is slower than IBOSS OBD. On top of that, we found that FO takes a long time to converge for $A$-optimality. SEQ also performed poorly for $A$-optimality in Scenario (ii), where the realized sample sizes are often substantially below the target, leading to reduced statistical efficiency.

While this paper offers a very efficient method for subdata selection, there are various issues that go beyond the scope of this paper and that will require further study. First, for small datasets (which could be encountered in some of the machine learning applications), the lower and upper bounds in Equation~(\ref{eff_bounds}) for the efficiency of the subdata may be further apart than in the empirical studies conducted here. This is simply because for smaller datasets, the ratio $\Phi(\xi^*)/\Phi(S^*)$ in Equation (\ref{eff_bounds}) may not be as close to 1. Hence, in such a case there will be more uncertainty in the efficiency of subdata $S$. Second, while the IBOSS OBD method converged almost all of the time for the tuning parameters that we selected, occasionally it did not (cf. Table~\ref{table:logitD}, where convergence did not occur in 1\% of the repetitions). But even in those cases, the $\Phi$-values for the bounded designs that had been obtained were essentially equal to those of the bounded optimal design. We are not aware of any situations where this is not the case, but may simply not have encountered situations where a different value for the tuning parameters should be selected. Third, as in the case of Scenario (iii), more complex models may not have an explicit information matrix. Clusterwise linear regression models are a special and relatively simple case of mixture-of-experts models. While \cite{Liu_Stufken_Yang_2024} developed a surrogate matrix that could be targeted for the case of clusterwise linear regression, this is, at least from a theoretical perspective, more difficult for general mixture-of-experts models. Fourth, paraphrasing Box, all models are wrong, though some are useful. However, it may be uncertain which model is useful. Thus, expanding the methodology to selecting subdata that is efficient for a family of models will be very helpful. Fifth, the focus in this paper has been on $A$- and $D$-optimality. These are criteria that are typically used when there is an interest in parameter estimation. Such subdata may not be the best when the primary focus is on prediction. Finally, the algorithm for IBOSS OBD is fairly complex, and the authors are planning to develop corresponding software packages to 
simplify the implementation and broaden the usage of this algorithm.

\bibliography{bibliography.bib} 

\appendix

\section{Technical appendices and supplementary material}
\subsection{Proofs of Theorem~1, Theorem~2, and Lemma~1}
\begin{proof}[Proof of Theorem~1]
In the context of optimal design, Theorem~3 in \cite{Welch1982} shows that identifying a D-optimal exact design is NP-hard. Using the notation in Equation~1, identifying an optimal exact design requires identification of $\boldsymbol{\delta} = (\delta_1, \ldots, \delta_N)$, subject to $\sum_{i=1}^N \delta_i = n$, such that the determinant of the information matrix is maximized. However, while the $\delta_i$'s must be 0 or 1 in the subdata selection problem, in the optimal exact design problem they can be any non-negative integer, as long as they add to $n$. 
 
The result of Theorem~1 follows from a connection between the optimal exact design problem and the subdata selection problem. If we replicate each of $N$ points $n$ times, then selecting optimal subdata of size $n$ from the resulting $n \times N$ points under the $D$-optimality criterion is equivalent to selecting a $D$-optimal exact design for the original population of $N$ points. Thus, if there were a polynomial time algorithm for the subdata selection problem, there would be one for the optimal exact design problem, which would contradict the result in  \cite{Welch1982}.
\end{proof}

\begin{proof}[Proof of Theorem 2]
We first show that $(I) \Rightarrow (II)$. Starting from the $\Phi$-optimal design $\xi^*$ in (I), let $(\mathcal{X}_1,\mathcal{X}_2,\mathcal{X}_3)$ be the corresponding partition of the points $\bx_i$ with weight equal to $\nu_i$, between $\nu_i$ and $\mu_i$, and equal to $\mu_i$, respectively. We first consider the case that $\mathcal{X}_2 \neq \emptyset$. If there is more than one point in $\mathcal{X}_2$, let $\bx_{c1}, \bx_{c2} \in \mathcal{X}_2$ be two distinct points. For a small positive $\epsilon$, define the new design $\xi$ that has the same weights as $\xi^*$, except that it increases the weight for $\bx_{c1}$ by $\epsilon$ and decreases that for $\bx_{c2}$ by the same amount. Thus, writing $w_i(\xi)$ for the new weights, 
\begin{enumerate}
    \item$ w_{i}(\xi)=w_{i}$ for $x_i \in \mathcal{X} \setminus \lbrace \bx_{c1},\bx_{c2}\rbrace$; 
    \item $w_{{c1}}(\xi)=w_{c1}+\epsilon$ for $\bx_{c1}$; and \item $w_{{c2}}(\xi)=w_{c2}-\epsilon$ for $\bx_{c2}$.
\end{enumerate}
Here, $\epsilon$ is taken to be small enough so that  $w_{{c1}}(\xi)\leq\mu_{c1}$ and $w_{{c2}}(\xi)\geq\nu_{c2}$. 
Using that $\Phi$ is linearly differentiable, we have
\begin{eqnarray*}
    F_\Phi(\xi^*, \xi)=&\sum_{\bx_i\in\mathcal{X}_1}F_\Phi(\xi^*;\bx_i)\nu_i+\sum_{\bx_i\in\mathcal{X}_3}F_\Phi(\xi^*;\bx_i)\mu_i+\sum_{\bx_i\in\mathcal{X}_2\setminus\lbrace \bx_{c1},\bx_{c2}\rbrace}F_\Phi(\xi^*;\bx_i)w_i \nonumber\\
    &+F_\Phi(\xi^*;\bx_{c1})(w_{c1}+\epsilon)+F_\Phi(\xi^*;\bx_{c2})(w_{c2}-\epsilon)\\
=&\sum_{\bx_i\in\mathcal{X}_1}F_\Phi(\xi^*;\bx_i)\nu_i+\sum_{\bx_i\in\mathcal{X}_3}F_\Phi(\xi^*;\bx_i)\mu_i+\sum_{\bx_i\in\mathcal{X}_2}F_\Phi(\xi^*;\bx_i)w_i\nonumber\\
    &+F_\Phi(\xi^*;\bx_{c1})\epsilon-F_\Phi(\xi^*;\bx_{c2})\epsilon\\
    =&F_\Phi(\xi^*,\xi^*)+\epsilon(F_\Phi(\xi^*;\bx_{c1})-F_\Phi(\xi^*;\bx_{c2}))\\
    =&\epsilon(F_\Phi(\xi^*;\bx_{c1})-F_\Phi(\xi^*;\bx_{c2})) \geq 0,
\end{eqnarray*}
where we used that $F_\Phi(\xi^*,\xi^*)=0$ and $\xi^*$ is $\Phi$-optimal. Thus, $F_\Phi(\xi^*;\bx_{c1})\geq F_\Phi(\xi^*;\bx_{c2})$. Similarly, 
$F_\Phi(\xi^*;\bx_{c2})\geq F_\Phi(\xi^*;\bx_{c1})$. Therefore, 
\begin{eqnarray}
\label{f_same}
    F_\Phi(\xi^*;\bx_{c2})= F_\Phi(\xi^*;\bx_{c1}).
\end{eqnarray}
Since $\bx_{c1}, \bx_{c2} \in \mathcal{X}_2$ are arbitrary,  
$F_\Phi(\xi^*;\bx_{i})$ is the same for all $\bx_i \in \mathcal{X}_2$, say $s$. Also by
$ F_\Phi(\xi^*, \xi^*)=0$, 
it follows that
\begin{eqnarray}
\label{s2}
   F_\Phi(\xi^*;\bx_{i})=s=\frac{-\sum_{\bx_i\in\mathcal{X}_1}F_\Phi(\xi^*;\bx_i)\nu_i-\sum_{\bx_i\in\mathcal{X}_3}F_\Phi(\xi^*;\bx_i)\mu_i}{1-(\sum_{\bx_i\in\mathcal{X}_1}\nu_i+\sum_{\bx_i\in\mathcal{X}_3}\mu_i)}
\end{eqnarray} 
for $\bx_i \in \mathcal{X}_2$. Since \eqref{s2} also holds if $\mathcal{X}_2$ consists of a single point, (II)(c) follows. Next we will show that (II)(b) holds. Select points $\bx_c \in \mathcal{X}_2$ and $\bx_u \in \mathcal{X}_3$. For a small positive $\epsilon$, we define another new design $\hat{\xi}$ with weights equal to the weights of $\xi^*$, except that the weight at $\bx_c$ is $\epsilon$ more and the weight at $\bx_u$ is $\epsilon$ less that that of $\xi^*$. Thus,
\begin{enumerate}
    \item$ w_{i}(\hat{\xi})=w_{i}$ for $\bx_i \in \mathcal{X}\setminus\lbrace \bx_c,\bx_u\rbrace$;
    \item $w_{{c}}(\hat{\xi})=w_{c}+\epsilon$ for $\bx_c$; and 
    \item $w_{{u}}(\hat{\xi})=w_{u}-\epsilon$ for $\bx_u$.
\end{enumerate}
Here, $\epsilon$ is small enough so that $w_{{c}}(\hat{\xi})\leq\mu_{c}$ and $w_{{u}}(\hat{\xi})\geq\nu_{u}$. Then,
\begin{eqnarray*}
    F_\Phi(\xi^*, \hat{\xi})=&\sum_{\bx_i\in\mathcal{X}_1}F_\Phi(\xi^*;\bx_i)\nu_i+\sum_{\bx_i\in\mathcal{X}_3\setminus\lbrace \bx_u\rbrace}F_\Phi(\xi^*;\bx_i)\mu_i+\sum_{\bx_i\in\mathcal{X}_2\setminus\lbrace \bx_{c}\rbrace}F_\Phi(\xi^*;\bx_i)w_i\nonumber\\ &+F_\Phi(\xi^*;\bx_{c})(w_{c}+\epsilon)+F_\Phi(\xi^*;\bx_{u})(w_{u}-\epsilon)\\
=&\sum_{\bx_i\in\mathcal{X}_1}F_\Phi(\xi^*;\bx_i)\nu_i+\sum_{\bx_i\in\mathcal{X}_3}F_\Phi(\xi^*;\bx_i)\mu_i+\sum_{\bx_i\in\mathcal{X}_2}F_\Phi(\xi^*;\bx_i)w_i\nonumber\\
    &+F_\Phi(\xi^*;\bx_{c})\epsilon-F_\Phi(\xi^*i;\bx_{u})\epsilon\\
    =&\epsilon(F_\Phi(\xi^*;\bx_{c})-F_\Phi(\xi^*;\bx_{u})) \geq 0,
\end{eqnarray*}
where we used again that $F_\Phi(\xi^*,\xi^*)=0$ and $\xi^*$ is $\Phi$-optimal. Thus $ F_\Phi(\xi^*;\bx_{u}) \leq F_\Phi(\xi^*;\bx_{c}) =s$
for arbitrary $\bx_u \in \mathcal{X}_3$. Similarly, we can show that $F_\Phi(\xi^*;\bx_{\ell}) \geq s$
for arbitrary $\bx_\ell \in \mathcal{X}_1$. Consequently (II)(b) follows.\\
Next we consider the case $\mathcal{X}_2 = \emptyset$. Select arbitrary points $\bx_\ell \in \mathcal{X}_1$ and $\bx_u \in \mathcal{X}_3$. For a small positive $\epsilon$, define a new design $\bar{\xi}$ with the same weights as $\xi^*$, except that the weight for $\bx_\ell$ is increased by $\epsilon$ and that for $\bx_u$ is decreased by $\epsilon$. Thus,
\begin{enumerate}
    \item$ w_{i}(\bar{\xi})=w_{i}$ for $\bx_i \in \mathcal{X}\setminus\lbrace \bx_\ell,\bx_u\rbrace$;
    \item$ w_{\ell}(\bar{\xi})=w_{\ell}+\epsilon$; and
    \item $w_{u}(\bar{\xi})=w_{u}-\epsilon$.
\end{enumerate}
Here, $\epsilon$ is small enough so that $w_{\ell}(\bar{\xi})\leq\mu_\ell$ and $w_{u}(\bar{\xi})\geq\nu_u$. Then,
\begin{eqnarray*}
 F_\Phi(\xi^*, \bar{\xi})=&\sum_{\bx_i\in\mathcal{X}_1}F_\Phi(\xi^*;\bx_i)\nu_i+\sum_{\bx_i\in\mathcal{X}_3}F_\Phi(\xi^*;\bx_i)\mu_i+\epsilon F_\Phi(\xi^*;\bx_\ell)-\epsilon F_\Phi(\xi^*;\bx_u)\\
 =&\epsilon(F_\Phi(\xi^*;x_\ell)-F_\Phi(\xi^*;x_u)) \geq 0.
\end{eqnarray*}
Since $\bx_\ell \in \mathcal{X}_1$ and $\bx_u \in \mathcal{X}_3$ are arbitrary, it follows that $\max\limits_{\bx_i\in\mathcal{X}_3}F_\Phi(\xi^*;\bx_i)\leq s \leq \min\limits_{\bx_i\in\mathcal{X}_1}F_\Phi(\xi^*;\bx_i)$.\\

Now we show that $(II) \Rightarrow (I)$.\\
By the convexity of $\Phi$, it suffices to show $F_\Phi(\xi^*, \xi)\geq 0$ for every competing design $\xi$. For any given $\xi \in \Xi$, we have $w_{i}(\xi)\geq w_i$ on $\mathcal{X}_1$ and $w_{i}(\xi)\leq w_i$ on $\mathcal{X}_3$. Thus we have
\begin{eqnarray*}
F_\Phi(\xi^*,\xi)=&\sum_{x_i\in\mathcal{X}_1}F_\Phi(\xi^*;\bx_i)w_{i}(\xi)+\sum_{\bx_i\in\mathcal{X}_2}F_\Phi(\xi^*;\bx_i)w_{i}(\xi)+\sum_{x_i\in\mathcal{X}_3}F_\Phi(\xi^*;\bx_i)w_{i}(\xi)\\
     \geq& \sum_{\bx_i\in\mathcal{X}_1}F_\Phi(\xi^*;\bx_i)w_{i}+ \min_{\bx_i\in\mathcal{X}_1}F_\Phi(\xi^*;\bx_i)\sum_{\bx_i\in\mathcal{X}_1}(w_{i}(\xi)-w_{i})\nonumber\\
    &+\sum_{\bx_i\in\mathcal{X}_3}F_\Phi(\xi^*;\bx_i)w_{i}+ \max_{\bx_i\in\mathcal{X}_3}F_\Phi(\xi^*;\bx_i)\sum_{\bx_i\in\mathcal{X}_3}(w_{i}(\xi)-w_{i})\nonumber\\
    &+\sum_{\bx_i\in\mathcal{X}_2}F_\Phi(\xi^*;\bx_i)w_{i}+\sum_{\bx_i\in\mathcal{X}_2}F_\Phi(\xi^*;\bx_i)(w_{i}(\xi)-w_{i})\\
    \geq& s(\sum_{\bx_i\in\mathcal{X}_1}(w_{i}(\xi)-w_{i})+\sum_{\bx_i\in\mathcal{X}_3}(w_{i}(\xi) - w_i)+\sum_{\bx_i\in\mathcal{X}_2}(w_{i}(\xi)-w_{i}))=0.
\end{eqnarray*}
\end{proof}

\begin{proof}[Proof of Lemma 1]
Let $n_i$ be the number of points in $\mathcal{X}^i$, $i=1,2,3$. Let $n_2 \ge 2$, which is the only case for which a proof is needed. The information matrix of $\xi_{\mathcal{P}}$ can be viewed as a function of \text{$\pmb{w}^\prime$}=$(w_1,\ldots,w_{n_2-1})$, where $w_i$ denotes the weight for $\bx_i\in \mathcal{X}_2$, $i=1,\ldots, n_2-1$. The weight of $\bx_{n_2}\in \mathcal{X}_2$ is $1-n_3/n-\sum_{i=1}^{n_2-1}w_i$. By Theorem 4 of \cite{Yang_Biedermann_Tang_JASA}, the optimal weights are critical points, that is, solutions of $\frac{\partial \Phi(\xi_{\mathcal{P}})}{\partial w_i}=0$, or boundary points, that is, $w_i=0$ or $1/n$. But in the latter case, the procedure would have moved the points to $\mathcal{X}_1$ or $\mathcal{X}_3$. Hence,  
	\begin{equation}
	\frac{\partial \Phi(\xi_{\mathcal{P}})}{\partial w_i}=F_{\Phi}(\xi_{\mathcal{P}};\bx_i)-F_{\Phi}(\xi_{\mathcal{P}};\bx_{n_2})=0, \ i=1,\ldots,n_2-1.
	\end{equation}
\end{proof}

\subsection{Proof of Theorem~3}
Before proving Theorem~3, we need the following lemma, which uses the notation and assumptions as in Theorem~3 and Algorithm~1.
\begin{lemma}\label{lemma2}
	 For $\Phi \in \Lambda$, suppose that there are a bounded approximate design $\xi_0 \in \Xi$ and $\delta>0$, so that, for an integer $t > 0$,
	$\Phi(\xi_{\mathcal{P}^{(t)}})-\Phi(\xi_0)>\delta$. Then one of the following statements must hold.
	\begin{itemize}
		\item[(a)] If $\mathcal{X}^{(t)}_2=\emptyset$, then 
		\begin{equation}
		F_{\Phi}(\xi_{\mathcal{P}^{(t)}};\bx_{1*})-F_{\Phi}(\xi_{\mathcal{P}^{(t)}};\bx_{3*})\leq -\delta. \label{lemma2:01}
		\end{equation}
		\item[(b)] If $\mathcal{X}^{(t)}_2\ne \emptyset$ and $\mathcal{X}^{(t)}_1= \emptyset$, then
			\begin{equation}
 F_{\Phi}(\xi_{\mathcal{P}^{(t)}};\bx_{3*})-s\geq \delta. \label{lemma2:02}
		   \end{equation}		
		\item[(c)] If $\mathcal{X}^{(t)}_2\ne \emptyset$ and $\mathcal{X}^{(t)}_3= \emptyset$, then
\begin{equation}
 F_{\Phi}(\xi_{\mathcal{P}^{(t)}};\bx_{1*})-s\leq -\delta. \label{lemma2:03}
\end{equation}	
		\item[(d)] If $\mathcal{X}^{(t)}_2\ne \emptyset$, $\mathcal{X}^{(t)}_1\ne  \emptyset$ and $\mathcal{X}^{(t)}_3\ne  \emptyset$, then one of the two following inequalities must hold:
\begin{equation}
\begin{split}
(i) F_{\Phi}(\xi_{\mathcal{P}^{(t)}};\bx_{1*})-s&\leq -\frac{\delta}{2},\\
(ii) F_{\Phi}(\xi_{\mathcal{P}^{(t)}};\bx_{3*})-s&\geq \frac{\delta}{2}. \label{lemma2:04}
\end{split}
\end{equation}	
Here $s=F_{\Phi}(\xi_{\mathcal{P}^{(t)}};\bx_i)$, $x_i \in \mathcal{X}^{(t)}_2$.
\end{itemize}
\end{lemma}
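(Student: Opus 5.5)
We argue through convexity. Write $\xi=\xi_{\mathcal{P}^{(t)}}$ with weights $w_i$, and let $\xi_0$ have weights $w_i^0$. Since $\Phi$ is convex, $\Phi(\xi_0)-\Phi(\xi)\ge F_\Phi(\xi,\xi_0)$. The hypothesis then gives $F_\Phi(\xi,\xi_0)<-\delta$.

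Linear differentiability together with $F_\Phi(\xi,\xi)=0$ lets us write
\[
F_\Phi(\xi,\xi_0)=\sum_{i=1}^N F_\Phi(\xi;\bx_i)\,(w_i^0-w_i).
\]
The differences $d_i=w_i^0-w_i$ sum to zero. We have $d_i\ge 0$ on $\mathcal{X}_1^{(t)}$, since $w_i=0$ there, and $d_i\le 0$ on $\mathcal{X}_3^{(t)}$, since $w_i=1/n$ there. Each $|d_i|\le 1/n$, so $\sum_{\mathcal{X}_1^{(t)}}d_i\le 1$ and $\sum_{\mathcal{X}_3^{(t)}}|d_i|\le 1$. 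These crude bounds by $1$ are what we will use to turn $-\delta$ into the stated thresholds.

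Set $D_1=\sum_{\mathcal{X}_1^{(t)}}d_i\ge0$, $D_3=-\sum_{\mathcal{X}_3^{(t)}}d_i\ge0$, and $D_2=\sum_{\mathcal{X}_2^{(t)}}d_i$. Then $D_1+D_2=D_3$. Bound $F_\Phi(\xi;\bx_i)$ below by $F_\Phi(\xi;\bx_{1*})$ on $\mathcal{X}_1^{(t)}$ and above by $F_\Phi(\xi;\bx_{3*})$ on $\mathcal{X}_3^{(t)}$. By Lemma~\ref{lemma1}, $F_\Phi(\xi;\bx_i)=s$ on $\mathcal{X}_2^{(t)}$. This gives
\[
-\delta> F_\Phi(\xi,\xi_0)\ge D_1F_\Phi(\xi;\bx_{1*})-D_3F_\Phi(\xi;\bx_{3*})+sD_2 .
\]

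We now treat the cases of the lemma.

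\textbf{Case (a).} Here $D_2=0$, so $D_1=D_3=:D$ and the bound reads $D\,(F_{1*}-F_{3*})<-\delta$, where $F_{j*}=F_\Phi(\xi;\bx_{j*})$. Then $D>0$ and $F_{1*}-F_{3*}<0$. Since $D\le 1$, it follows that $F_{1*}-F_{3*}<-\delta/D\le-\delta$.

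\textbf{Case (b).} Here $D_1=0$ and $D_2=D_3$, so $-\delta>D_3(s-F_{3*})$. This forces $F_{3*}-s>\delta/D_3\ge\delta$.

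\textbf{Case (c).} Symmetrically, $D_3=0$ and $D_2=-D_1$, so $-\delta>D_1(F_{1*}-s)$. Hence $F_{1*}-s<-\delta$.

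\textbf{Case (d).} Substitute $D_2=D_3-D_1$ to get
\[
-\delta>D_1(F_{1*}-s)+D_3(s-F_{3*}).
\]
Both $D_1$ and $D_3$ are at most $1$. Suppose $D_1(F_{1*}-s)\ge-\delta/2$ and $D_3(s-F_{3*})\ge-\delta/2$; then the sum is at least $-\delta$, a contradiction. So one of these terms is below $-\delta/2$. Dividing by the corresponding $D_j\le1$, which must be positive, yields (i) or (ii).

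The main obstacle I expect is justifying the bounds $D_1,D_3\le 1$ and dividing by them correctly when the relevant term is negative. This is handled by $\sum_i w^0_i=1$ and $w^0_i\ge 0$ on $\mathcal{X}_1^{(t)}$, and by $\sum_i w_i=1$ on $\mathcal{X}_3^{(t)}$. A secondary point is checking that Lemma~\ref{lemma1} applies: this holds because the weights on $\mathcal{X}_2^{(t)}$ have been optimized after line~\ref{alg:optw} and lie strictly inside $(0,1/n)$.
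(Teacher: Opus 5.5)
Your proof is correct and follows essentially the same route as the paper. Convexity gives $F_\Phi(\xi_{\mathcal{P}^{(t)}},\xi_0)<-\delta$; linear differentiability together with $F_\Phi(\xi,\xi)=0$ rewrites this in terms of weight differences, bounded using $F_\Phi(\xi_{\mathcal{P}^{(t)}};\bx_{1*})$, $F_\Phi(\xi_{\mathcal{P}^{(t)}};\bx_{3*})$ and $s$; and each case then uses that the sums of weight differences over $\mathcal{X}_1^{(t)}$ and $\mathcal{X}_3^{(t)}$ are at most $1$ in absolute value. You argue directly rather than by contradiction, and your $D_1+D_2=D_3$ bookkeeping is tidier than the paper's (which in case (d) writes $\mathcal{X}_2^{(t)}$ where $\mathcal{X}_3^{(t)}$ is meant).
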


\begin{proof}
	For $0 \le \alpha \le 1$, consider the design $(1-\alpha) \xi_{\mathcal{P}^{(t)}} +\alpha \xi_0$. Using the convexity of $\Phi$, we have that
	$\Phi((1-\alpha)\xi_{\mathcal{P}^{(t)}}+\alpha \xi_0)\leq (1-\alpha)\Phi(\xi_{\mathcal{P}^{(t)}})+\alpha  \Phi(\xi_0)$. Hence,
	\begin{equation}\begin{split}
	F_{\Phi}(\xi_{\mathcal{P}^{(t)}},\xi_0)&=\lim_{\alpha \downarrow 0}\frac{1}{\alpha }\left(\Phi((1-\alpha)\xi_{\mathcal{P}^{(t)}}+\alpha \xi_0)-\Phi(\xi_{\mathcal{P}^{(t)}})\right)\\
	&\leq  \Phi(\xi_0)-\Phi(\xi_{\mathcal{P}^{(t)}})\\
	&<-\delta. \label{lemma2:2}
	\end{split}\end{equation}
	For each $\bx_i$, let $w_i$ and $w_{i}(\xi_0)$ be the weights in $\xi_{\mathcal{P}^{(t)}}$ and $\xi_0$, respectively. 
	By the definitions of $\mathcal{X}^{(t)}_1$ and $\mathcal{X}^{(t)}_3$, since $\Phi$ is linearly differentiable, we have
	\begin{equation}\begin{split}
	F_{\Phi}(\xi_{\mathcal{P}^{(t)}},\xi_0)=&\sum_{x_i\in \mathcal{X}^{(t)}_1}F_{\Phi}(\xi_{\mathcal{P}^{(t)}};\bx_i)w_{i}(\xi_0)+\sum_{\bx_i\in \mathcal{X}^{(t)}_2}F_{\Phi}(\xi_{\mathcal{P}^{(t)}};\bx_i)w_{i}(\xi_0)+\sum_{\bx_i\in \mathcal{X}^{(t)}_3}F_{\Phi}(\xi_{\mathcal{P}^{(t)}};\bx_i)w_{i}(\xi_0)\\
	\geq& \sum_{\bx_i\in \mathcal{X}^{(t)}_1}F_{\Phi}(\xi_{\mathcal{P}^{(t)}};\bx_i)w_{i} + F_{\Phi}(\xi_{\mathcal{P}^{(t)}};\bx_{1*})\sum_{\bx_i\in \mathcal{X}^{(t)}_1}(w_{i}(\xi_0)-w_i)\\
	&+\sum_{\bx_i\in \mathcal{X}^{(t)}_3}F_{\Phi}(\xi_{\mathcal{P}^{(t)}};\bx_i)w_{i} + F_{\Phi}(\xi_{\mathcal{P}^{(t)}};\bx_{3*})\sum_{\bx_i\in \mathcal{X}^{(t)}_3}(w_{i}(\xi_0)-w_i)\\
	&+\sum_{\bx_i\in \mathcal{X}^{(t)}_2}F_{\Phi}(\xi_{\mathcal{P}^{(t)}};\bx_i)w_{i} +  \sum_{\bx_i\in \mathcal{X}^{(t)}_2}F_{\Phi}(\xi_{\mathcal{P}^{(t)}};\bx_i)(w_{i}(\xi_0)-w_i)\\
= &  F_{\Phi}(\xi_{\mathcal{P}^{(t)}};\bx_{1*})\sum_{x_i\in \mathcal{X}^{(t)}_1}(w_{i}(\xi_0)-w_i)+ F_{\Phi}(\xi_{\mathcal{P}^{(t)}};\bx_{3*})\sum_{\bx_i\in \mathcal{X}^{(t)}_3}(w_{i}(\xi_0)-w_i)+ \\
&+ \sum_{\bx_i\in \mathcal{X}^{(t)}_2}s(w_{i}(\xi_0)-w_i),	
		 \label{lemma2:3}
	\end{split}\end{equation}
where we used that $F_{\Phi}(\xi_{\mathcal{P}^{(t)}}, \xi_{\mathcal{P}^{(t)}})=0$.
Notice that when $\mathcal{X}^{(t)}_2\ne \emptyset$, by Lemma 1, $s=F_{\Phi}(\xi_{S^{(t)}};\bx_i)$, $\bx_i \in \mathcal{X}^{(t)}_2$, is well-defined. 
By Equations~(\ref{lemma2:2}) and (\ref{lemma2:3}), for the right hand side (RHS) of (\ref{lemma2:3}) we have
\begin{equation}
\text{RHS of (\ref{lemma2:3})}< -\delta. \label{lemma2:4}
\end{equation}
Next, we consider each case in the statement of the lemma separately, and will arrive at a contradiction with Equation~(\ref{lemma2:4}) if a case does not hold. 

For Case (a), suppose that Equation~(\ref{lemma2:01}) does not hold, so that $F_{\Phi}(\xi_{\mathcal{P}^{(t)}};\bx_{1*})-F_{\Phi}(\xi_{\mathcal{P}^{(t)}},\bx_{3*})> -\delta$. 
This implies that
\begin{equation*}
\begin{split}
\text{RHS of (\ref{lemma2:3})}&\geq F_{\Phi}(\xi_{\mathcal{P}^{(t)}};\bx_{3*})\sum_{\bx_i\in \mathcal{X}}(w_{i}(\xi_0)-w_i)-\delta \sum_{\bx_i\in \mathcal{X}^{(t)}_1}(w_{i}(\xi_0)-w_i)\\
 &\geq -\delta. 
 \end{split}
\end{equation*}
This contradicts Equation~(\ref{lemma2:4}), so that Case (a) must hold.

For Case (b), suppose that Equation~(\ref{lemma2:02}) does not hold, so that $F_{\Phi}(\xi_{\mathcal{P}^{(t)}};\bx_{3*})-s< \delta$. 
This implies that 
\begin{equation*}
\begin{split}
\text{RHS of (\ref{lemma2:3})}&=s\sum_{\bx_i\in \mathcal{X}}(w_{i}(\xi_0)-w_i) +\left(F_{\Phi}(\xi_{\mathcal{P}^{(t)}};\bx_{3*})-s\right)\sum_{\bx_i\in \mathcal{X}^{(t)}_3}(w_{i}(\xi_0)-w_i)\\
&\geq \delta \sum_{\bx_i\in \mathcal{X}^{(t)}_3}(w_{i}(\xi_0)-w_i)\geq -\delta. 
\end{split}
\end{equation*}
This again contradicts Equation~(\ref{lemma2:4}), so that Case (b) must hold.

Next, for Case (c), suppose that Equation~(\ref{lemma2:03}) does not hold, so that $F_{\Phi}(\xi_{\mathcal{P}^{(t)}};\bx_{1*})-s> -\delta$. 
This implies that
\begin{equation*}
\begin{split}
\text{RHS of (\ref{lemma2:3})}&=s\sum_{\bx_i\in \mathcal{X}}(w_{i}(\xi_0)-w_i) +\left(F_{\Phi}(\xi_{\mathcal{P}^{(t)}};\bx_{1*})-s\right)\sum_{\bx_i\in \mathcal{X}^{(t)}_1}(w_{i}(\xi_0)-w_i)\\
&\geq -\delta \sum_{\bx_i\in \mathcal{X}^{(t)}_1}(w_{i}(\xi_0)-w_i)\geq -\delta. 
\end{split}
\end{equation*}
This is once more a contradiction to Equation~(\ref{lemma2:4}), this time implying that Case (c) must hold.
	
Finally, for Case (d), suppose that neither of the two inequalities in (\ref{lemma2:04}) holds, so that 
\begin{equation}
\begin{split}
(i) F_{\Phi}(\xi_{\mathcal{P}^{(t)}};\bx_{1*})-s&> -\frac{\delta}{2}, \text{~and}\\
(ii) F_{\Phi}(\xi_{\mathcal{P}^{(t)}};\bx_{3*})-s&< \frac{\delta}{2}. \label{lemma2:5}
\end{split}
\end{equation}	
This implies that
\begin{equation}
\begin{split}
\text{RHS of (\ref{lemma2:3})}&=s\sum_{\bx_i\in \mathcal{X}}(w_{i}(\xi_0)-w_i) +
\left(F_{\Phi}(\xi_{\mathcal{P}^{(t)}};\bx_{1*})-s\right)\sum_{\bx_i\in \mathcal{X}^{(t)}_1}(w_{i}(\xi_0)-w_i)\\ &+\left(F_{\Phi}(\xi_{\mathcal{P}^{(t)}};\bx_{3*})-s\right)\sum_{\bx_i\in \mathcal{X}^{(t)}_3}(w_{i}(\xi_0)-w_i)\\
\ge & -\frac{\delta}{2} \sum_{\bx_i\in \mathcal{X}^{(t)}_1}(w_{i}(\xi_0)-w_i)
+\frac{\delta}{2} \sum_{\bx_i\in \mathcal{X}^{(t)}_2}(w_{i}(\xi_0)-w_i)\\
\ge & -\delta.
\label{lemma2:6}
\end{split}
\end{equation}
This contradiction to Equation~(\ref{lemma2:4}) implies that Case (d) must hold.
\end{proof}

\begin{lemma}\label{lemma3}
	Under $\Phi$ optimality, i.e., 
\begin{equation}
    \Phi(\xi)=\begin{cases}\log \text{det}(I_{\boldsymbol{\theta}}^{-1}(\xi)), & D-\text{optimality}\\
  Tr(I_{\boldsymbol{\theta}}^{-p}(\xi)), & A-\text{optimality}      
    \end{cases},
\end{equation}
for designs $\xi_1$, $\xi_2$ and $\xi_3$ and  positive value $\gamma$ for which $\xi = \xi_1 + \gamma(\xi_2 - \xi_3)$ is a design and $\Phi(\xi)$ is bounded, then $\Phi(\xi)$  has a bounded second derivative respective to $\gamma$.
\end{lemma}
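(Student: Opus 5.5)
The plan is to write the information matrix along the segment as an affine function of $\gamma$ and differentiate twice explicitly. Set $M(\gamma)=I_{\bt}(\xi)=M_1+\gamma B$ with $M_1=I_{\bt}(\xi_1)$ and $B=I_{\bt}(\xi_2)-I_{\bt}(\xi_3)$. Because $\xi$ is a bounded approximate design on the finite set $\{\bx_1,\ldots,\bx_N\}$, every entry of $B$ is bounded by $2\max_i\|I_{\bt}(\bx_i)\|$, a constant that depends only on the data. The hypothesis that $\Phi(\xi)$ is bounded, say $\Phi(\xi)\le K$, will supply a lower bound on the smallest eigenvalue of $M(\gamma)$; this is the key ingredient.

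First, I compute the derivatives using $\frac{d}{d\gamma}M^{-1}=-M^{-1}BM^{-1}$. For $D$-optimality, $\Phi=-\log\det M$, so $\Phi'=-Tr(M^{-1}B)$ and $\Phi''=Tr(M^{-1}BM^{-1}B)=\|M^{-1/2}BM^{-1/2}\|_F^2$. For the trace criterion $Tr(M^{-p})$, repeated differentiation gives $\Phi''$ as a finite sum of traces of products of the form $M^{-a}BM^{-b}BM^{-c}$, where the exponents total $p+2$ counted in powers of $M^{-1}$. For noninteger $p$ I would work in the eigenbasis of $M$, using the Daleckii–Krein formula; alternatively I would restrict to integer $p$, which is the case needed for $A$-optimality with $p=1$. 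In every case $|\Phi''|\le c_p\,\|B\|^2\,\lambda_{\min}(M)^{-(p+2)}$, and for $D$-optimality $|\Phi''|\le p_1\|B\|^2\lambda_{\min}(M)^{-2}$.

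The main step is to bound $\lambda_{\min}(M(\gamma))$ away from zero using only the boundedness of $\Phi$. The upper bound on eigenvalues is easy: $M(\gamma)=\sum_i w_i I_{\bt}(\bx_i)$ with $\sum_i w_i=1$, so $\lambda_{\max}(M)\le \Lambda:=\max_i\lambda_{\max}(I_{\bt}(\bx_i))$. For the trace criterion, $\lambda_{\min}^{-p}\le Tr(M^{-p})\le K$ gives $\lambda_{\min}\ge K^{-1/p}$ immediately. For $D$-optimality, $-\log\det M=-\sum_j\log\lambda_j\le K$ combined with $\lambda_j\le\Lambda$ gives $-\log\lambda_{\min}\le K+(p_1-1)\log\Lambda$, hence $\lambda_{\min}\ge e^{-K}\Lambda^{-(p_1-1)}$. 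Substituting these bounds into the derivative bounds yields a uniform bound on $|\Phi''(\gamma)|$ that depends only on $K$, $\Lambda$, $p_1$, $p$ and the data.

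The step I expect to need the most care is the trace criterion for general noninteger $p$, where the second derivative is not a simple finite product. There I would fall back on the spectral divided-difference representation $\Phi''=\sum_{j,k}f^{[2]}(\lambda_j,\lambda_k,\cdot)\,|b_{jk}|^2$-type terms with $f(t)=t^{-p}$, where $b_{jk}$ are the entries of $B$ in the eigenbasis of $M$. On $[\lambda_{\min},\Lambda]$ these divided differences are bounded by $\sup|f''|=p(p+1)\lambda_{\min}^{-p-2}$, so the same estimate goes through.
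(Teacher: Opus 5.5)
Your proposal is correct and follows the same route as the paper. You write $M(\gamma)=M_1+\gamma B$ and differentiate twice explicitly, bound $B$ because every design lives on the finite point set, and use boundedness of $\Phi$ to control $M^{-1}$. Your version is in fact more careful than the paper's: in the log-det case you supply the upper eigenvalue bound $\lambda_{\max}(M)\le\Lambda$, which is needed to turn $\Phi\le K$ into a lower bound on $\lambda_{\min}(M)$ and which the paper leaves implicit, and you actually carry out the $Tr(M^{-p})$ case that the paper only calls analogous.
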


\begin{proof}
We first consider the case $p=0$; the case $p>0$ is analogous.  
When $p=0$,
\[
\frac{\partial^2 \Phi(\xi)}{\partial \gamma^2}
= \operatorname{Tr}\!\left(
I_{\boldsymbol{\theta}}^{-1}(\xi) 
\Delta
I_{\boldsymbol{\theta}}^{-1}(\xi) 
\Delta
\right),
\quad
\Delta := I_{\boldsymbol{\theta}}(\xi_2) - I_{\boldsymbol{\theta}}(\xi_3).
\]
Since $\Phi(\xi)$ is bounded, the eigenvalues (hence all entries) of $I_{\boldsymbol{\theta}}^{-1}(\xi)$ are bounded.  
Each $I_{\boldsymbol{\theta}}(\xi_j)$, $j=2,3$, is a convex combination of finitely many well-defined information matrices, so their entries are bounded; therefore $\Delta$ has bounded entries.  
This establishes the claim.

\end{proof}

\begin{proof}[Proof of Theorem 3]
Let $\xi^*=argmin_{\xi\in \Xi}\Phi\left({\xi}\right) $
and consider $\Phi(\xi[\bx_+, \bx_-,\gamma])$, where $\Xi$ and $\xi[\bx_+, \bx_-,\gamma]$ defined in Section 3.2.
Then $\Phi(\xi[\bx_+, \bx_-,\gamma])$ is infinitely differentiable with respect to $\gamma$. In the iteration stage, the $\Phi$-values of designs decrease. Thus, they are uniformly bounded by the $\Phi$-value of the design at the beginning of the iteration stage, say $\Phi(\xi_0)$. By Assumption 1, we have that 
$\Phi(\xi[\bx_+, \bx_-,\gamma])<C\Phi(\xi)$ for some constant $C>0$. Consequently, $\Phi(\xi[\bx_+, \bx_-,\gamma])<C\Phi(\xi_0)$. 
By Lemma \ref{lemma3}, there exists $K<\infty$, such that
	\begin{equation}\begin{split}
	\frac{\partial^2\Phi(\xi[\bx_+, \bx_-,\gamma])}{\partial \gamma^2}=K. \label{convg:1}
	\end{split}\end{equation}
Notice that $\Phi(\xi_{\mathcal{P}^{(t)}})$ is a decreasing nonnegative function of $t$ and bounded by $\Phi({\xi^*})$. Thus $\lim_{t\rightarrow \infty}\Phi(\xi_{\mathcal{P}^{(t)}})$  exists. We shall show that
	\begin{equation}\begin{split}
	\lim_{t\rightarrow \infty}\Phi(\xi_{\mathcal{P}^{(t)}})=\Phi({\xi^*}). \label{convg:2}
	\end{split}\end{equation}
If Equation~(\ref{convg:2}) does not hold, then, by the monotonicity of $\Phi(\xi_{\mathcal{P}^{(t)}})$, there exists a $\delta>0$ such that
$\Phi(\xi_{\mathcal{P}^{(t)}})-\Phi({\xi^*})>\delta $ for all $t$. We will show that this leads to a contradiction by considering the following four exhaustive scenarios for the sequence of partitions $\mathcal{P}^{(t)}$:
\begin{itemize}
	\item[(i)] $\mathcal{X}^{(t)}_2=\emptyset$ for infinitely many values of $t$;
	\item[(ii)] $\mathcal{X}^{(t)}_2\ne \emptyset$ and $\mathcal{X}^{(t)}_1= \emptyset$ for infinitely many values of $t$;
	\item[(iii)] $\mathcal{X}^{(t)}_2\ne \emptyset$ and $\mathcal{X}^{(t)}_3= \emptyset$ for infinitely many values of $t$;
	\item[(iv)] $\mathcal{X}^{(t)}_2\ne \emptyset$, $\mathcal{X}^{(t)}_1\ne  \emptyset$, and $\mathcal{X}^{(t)}_3\ne \emptyset$ for infinitely many values of $t$.	 
\end{itemize}	

If Scenario (i) holds, then there must exist an infinite subsequence of $\mathcal{P}^{(t)}$ such that $\mathcal{X}^{(t)}_2=\emptyset$. For simplicity of notation, we index the subsequence again by $t$, and have a sequence $\mathcal{P}^{(t)}$ so that $\mathcal{X}^{(t)}_2=\emptyset$ for all $t$. Define the new design $\xi_{\mathcal{P}^{(t)}}[\bx_{1*}, \bx_{3*},\gamma]$, which belongs to $\Xi$ for $\gamma < 1/n$. Note that $\bx_{1*}$ and $\bx_{3*}$ depend on $t$, but we surpress this in the notation for simplicity. By the definition of $\mathcal{P}^{(t+1)}$, we have that
\begin{equation}\begin{split}
\Phi(\xi_{\mathcal{P}^{(t+1)}})\leq \Phi(\xi_{\mathcal{P}^{(t)}}[\bx_{1*}, \bx_{3*},\gamma]). \label{convg:3}
\end{split}\end{equation}
Using a Taylor series expansion near $\gamma = 0$, and applying Equations~(\ref{lemma2:01}) and (\ref{convg:1}), it can be shown that
\begin{equation}\begin{split}
\Phi(\xi_{\mathcal{P}^{(t)}}[x_{1*}, x_{3*},\gamma])=&\Phi(\xi_{\mathcal{P}^{(t)}})+\left(F_{\Phi}(\xi_{\mathcal{P}^{(t)}},\bx_{1*})- F_{\Phi}(\xi_{\mathcal{P}^{(t)}},\bx_{3*})\right)\gamma\\
&+ \frac{1}{2}\gamma^2 \left.\frac{\partial^2\Phi(\xi[x_{1*}, x_{3*},\gamma])}{\partial \gamma^2}\right|_{\gamma=\gamma'}\\
\leq& \Phi(\xi_{\mathcal{P}^{(t)}})-\delta\gamma+ \frac{1}{2}K\gamma^2, \label{convg:4}
\end{split}\end{equation}
where $\gamma'\in[0,\gamma]$. If $K>n\delta$, let $\gamma=\frac{\delta}{K}$. By Equation~(\ref{convg:4}),  we have that
\begin{equation}\begin{split}
\Phi(\xi_{\mathcal{P}^{(t)}}[\bx_{1*}, \bx_{3*},\frac{\delta}{K}])-\Phi(\xi_{\mathcal{P}^{(t)}})\leq -\frac{\delta^2}{2K}. \label{convg:5}
\end{split}\end{equation}
By (\ref{convg:3}) and (\ref{convg:5}), we have, for all $t\geq 0$,
\begin{equation}\begin{split}
\Phi(\xi_{\mathcal{P}^{(t+1)}})-\Phi(\xi_{\mathcal{P}^{(t)}})\leq -\frac{\delta^2}{2K}. \label{convg:6}
\end{split}\end{equation}
Since Equation~(\ref{convg:6}) holds for all $t$, it implies that $\lim_{t\rightarrow \infty}\Phi(\xi_{\mathcal{P}^{(t)}})=-\infty$, which contradicts the fact
that $\Phi(\xi_{\mathcal{P}^{(t)}})$ is bounded by $\Phi({\xi^*})$. Similar arguments can be applied to the case when $K\leq n\delta$, in which case we let $\gamma=1/n$.

If Scenario (ii) holds, then, again by focusing on a subsequence of $\mathcal{P}^{(t)}$, we can assume that $\mathcal{X}^{(t)}_2\ne \emptyset$ and $\mathcal{X}^{(t)}_1= \emptyset$ for all $t$. Define $\bx_{2*}$ as a point in $\mathcal{X}_2^{(t)}$ with the smallest weight, say $w_{min}^{(t)}$. Note that $\bx_{2*}$ depends on $t$, which we again suppress in the notation for simplicity.
Note also that $0<w_{min}^{(t)}<1/n$, so that ${\liminf}_{t\rightarrow \infty}w_{min}^{(t)}$ exists. Furthermore, ${\liminf}_{t\rightarrow \infty}w_{min}^{(t)} < 1/n$, because otherwise, for any given $\epsilon>0$, for large enough $t$ we would have that $w_i>1/n-\epsilon$ for all $\bx_i\in \mathcal{X}^{(t)}_2$. Since points that are not in $\mathcal{X}^{(t)}_2$ must be in $\mathcal{X}^{(t)}_3$ for this scenario, it would contradict $\sum_{i=1}^N w_i=1$. Define 
$\gamma_0={\liminf}_{t\rightarrow \infty}w_{min}^{(t)}$. 
Consider the design 
$\xi_{\mathcal{P}^{(t)}}[\bx_{2*}, \bx_{3*},\gamma]$, where $\gamma\in[0,(1/n-\gamma_0)/2]$. From a Taylor series expansion as in Equation~(\ref{convg:4}), and using Equation~(\ref{lemma2:02}), the remainder of the argument is similar to that for design $\xi_{\mathcal{P}^{(t)}}[\bx_{1*}, \bx_{3*},\gamma)$ in Scenario (i).
 
We skip the proof for Scenario (iii) since this case is similar to Scenario (ii).

If Scenario (iv) holds, then (i) or (ii) of Equation~(\ref{lemma2:04}) must hold for infinitely many values of $t$. Without loss of generality we assume that (i) since the proof for (ii) is similar. By taking a subsequence, we can again assume that it holds for all $t$. 
For this scenario, we define $\bx_{2*}$ (with its dependence on $t$ again suppressed in the notation) as a point in $\mathcal{X}_2^{(t)}$ with the largest weight, say $w_{max}^{(t)}$.
Then, $0<w_{max}^{(t)}<1/n$ and ${\limsup}_{t\rightarrow \infty}w_{max}^{(t)}$ exists.

Define $\gamma_1={\limsup}_{t\rightarrow \infty}w_{max}^{(t)}$, and assume that it is positive. Consider the design $\xi_{\mathcal{P}^{(t)}}[\bx_{1*}, \bx_{2*},\gamma]$ for $\gamma\in[0,\gamma_1/2]$. 
From a Taylor series expansion as in Equation~(\ref{convg:4}), and using (i) of Equation~(\ref{lemma2:04}), the remainder of the argument is similar to that for design $\xi_{\mathcal{P}^{(t)}}(x_{1*}, x_{3*},\gamma)$ in the proof for Scenario (i).

It remains to consider the case $\gamma_1=0$. Then, for any $\epsilon>0$, for large enough $t$, $w_i<\epsilon$ for all $\bx_i\in \mathcal{X}^{(t)}_2$. With $n_i$ denoting the number of points in $\mathcal{X}^{(t)}_i$, $i=1,2,3$, we have, for large enough $t$, that $n_3/n\leq 1<n_3/n+n_2\epsilon$. Since this holds for any $\epsilon>0$, it implies that $n_3/n= 1$ for large enough $t$. But that implies that $n_2=0$, which contradicts that $\mathcal{X}^{(t)}_2\ne \emptyset$ for all $t$. 
\end{proof}

\subsection{Proofs of Theorem~4 and Theorem~5}
\begin{proof}[Proof of Theorem 4]
From the convexity of $\Phi$, for any design $\xi=\{(\bx_i,w_{i}(\xi)), i=1,\ldots,N\}$, we have that 
\begin{eqnarray*}\begin{split}
\Phi\left((1-\alpha)\xi_{\epsilon}+\alpha \xi\right)-\Phi (\xi_{\epsilon})&\leq (1-\alpha)\Phi(\xi_{\epsilon})+\alpha \Phi (\xi)-\Phi (\xi_{\epsilon})\\
&=\alpha \left(\Phi (\xi)-\Phi (\xi_{\epsilon})\right).
\end{split}
\end{eqnarray*}
This implies that 
\begin{eqnarray}\label{speed2}
F_\Phi(\xi_{\epsilon},\xi) \leq \Phi (\xi)-\Phi (\xi_{\epsilon}).
\end{eqnarray}
Next we shall show that  $F_\Phi(\xi_{\epsilon},\xi)>-\epsilon$. 
Using $w_i$ to denote the weight of $\bx_i$ in $\xi_{\epsilon}$, note that $w_{i}(\xi)\geq w_{i}$ on $\mathcal{X}_1$ and $w_{i}(\xi)\leq w_{i}$ on $\mathcal{X}_3$. There are two cases we need to consider (i)  $\mathcal{X}_2\ne \emptyset$ and (ii) $\mathcal{X}_2= \emptyset$. 

Case (i) $\mathcal{X}_2\ne \emptyset$, then we have
\begin{eqnarray*}
\begin{split}
F_\Phi(\xi_{\epsilon},\xi) =&\sum_{x_i\in\mathcal{X}_1}F_\Phi(\xi_{\epsilon};\bx_i)w_{i}(\xi)+\sum_{\bx_i\in\mathcal{X}_2}F_\Phi(\xi_{\epsilon};\bx_i) w_{i}(\xi)+\sum_{x_i\in\mathcal{X}_3}F_\Phi(\xi_{\epsilon};\bx_i)w_{i}(\xi)\\
     \geq& 
\sum_{x_i\in\mathcal{X}_1}F_\Phi(\xi_{\epsilon};\bx_i)w_{i}+ \min_{\bx_i\in\mathcal{X}_1}F_\Phi(\xi_{\epsilon};\bx_i)\sum_{\bx_i\in\mathcal{X}_1}(w_{i}(\xi)-w_{i})\\
&+\sum_{\bx_i\in\mathcal{X}_3}F_\Phi(\xi_{\epsilon};\bx_i)w_{i}- \max_{\bx_i\in\mathcal{X}_3}F_\Phi(\xi_{\epsilon};\bx_i)\sum_{\bx_i\in\mathcal{X}_3}(w_{i}-w_{i}(\xi))\\
&+\sum_{\bx_i\in\mathcal{X}_2}F_\Phi(\xi_{\epsilon};\bx_i)w_{i}+(s-\frac{\epsilon}{2})\sum_{\bx_i\in\mathcal{X}_2 \text{ and } w_{i}(\xi)>w_{i}} (w_{i}(\xi)-w_{i})\\
&+(s+\frac{\epsilon}{2})\sum_{\bx_i\in\mathcal{X}_2 \text{ and } w_{i}(\xi)\leq w_{i} }(w_{i}(\xi)-w_{i})\\
    \geq& s\left(\sum_{\bx_i\in\mathcal{X}_1}(w_{i}(\xi)-w_{i})+\sum_{\bx_i\in\mathcal{X}_3}(w_{i}(\xi)-w_{i})+\sum_{\bx_i\in\mathcal{X}_2}(w_{i}(\xi)-w_{i})\right)\\
    &-\frac{\epsilon}{2}\left(\sum_{\bx_i\in\mathcal{X}_1}(w_{i}(\xi)-w_{i})+ \sum_{\bx_i\in\mathcal{X}_3}(w_{i}-w_{i}(\xi)) \right)\\
    &-\frac{\epsilon}{2}\left(\sum_{\bx_i\in\mathcal{X}_2 \text{ and } w_{i}(\xi)>w_{i} }(w_{i}(\xi)-w_{i}) +
    \sum_{\bx_i\in\mathcal{X}_2 \text{ and } w_{i}(\xi)>w_{i} }(w_{i}-w_{i}(\xi))\right)\\
    &\geq -\epsilon.
\end{split}
\end{eqnarray*}

Case (ii) $\mathcal{X}_2= \emptyset$,  then we have
\begin{eqnarray*}
\begin{split}
F_\Phi(\xi_{\epsilon},\xi) =&\sum_{x_i\in\mathcal{X}_1}F_\Phi(\xi_{\epsilon};\bx_i)w_{i}(\xi)+\sum_{x_i\in\mathcal{X}_3}F_\Phi(\xi_{\epsilon};\bx_i)w_{i}(\xi)\\
     \geq& 
\sum_{x_i\in\mathcal{X}_1}F_\Phi(\xi_{\epsilon};\bx_i)w_{i}+ \min_{\bx_i\in\mathcal{X}_1}F_\Phi(\xi_{\epsilon};\bx_i)\sum_{\bx_i\in\mathcal{X}_1}(w_{i}(\xi)-w_{i})\\
&+\sum_{\bx_i\in\mathcal{X}_3}F_\Phi(\xi_{\epsilon};\bx_i)w_{i}- \max_{\bx_i\in\mathcal{X}_3}F_\Phi(\xi_{\epsilon};\bx_i)\sum_{\bx_i\in\mathcal{X}_3}(w_{i}-w_{i}(\xi))\\
    \geq& \max_{\bx_i\in\mathcal{X}_3}
  \left(\sum_{\bx_i\in\mathcal{X}_1}(w_{i}(\xi)-w_{i})+\sum_{\bx_i\in\mathcal{X}_3}(w_{i}(\xi)-w_{i})\right)
    -\epsilon \sum_{\bx_i\in\mathcal{X}_1}(w_{i}(\xi)-w_{i})\\
    \geq & -\epsilon.
\end{split}
\end{eqnarray*}
For both cases, we have $F_\Phi(\xi_{\epsilon},\xi)>-\epsilon$. By (\ref{speed2}),  $\xi_{\epsilon}$ is an $\epsilon$-approximation of the $\Phi$-optimal design $\xi^*$. 
\end{proof}

\begin{proof}[Proof of Theorem 5]
We shall study $\Phi(\xi_{\mathcal{P}^{(t)}})-\Phi(\xi_{\mathcal{P}^{(t+1)}})$ when, for some $t$, Condition $(b)$ or Condition $(c)$ in Theorem~4 is not satisfied. We use notation defined in the proof of Theorem~3, such as $\xi^*$, $\xi[\boldsymbol{x}_{+},\boldsymbol{x}_{-},\gamma]$, and $K$. First consider the case $\mathcal{X}_2^{(t)} = \emptyset$. Then 	
\begin{equation}
		F_{\Phi}(\xi_{\mathcal{P}^{(t)}};\bx_{1*})-F_{\Phi}(\xi_{\mathcal{P}^{(t)}};\bx_{3*})\leq -\epsilon. \label{speed3}
\end{equation}
For design $\xi_{\mathcal{P}^{(t)}}[\bx_{1*}, \bx_{3*},\gamma]$, from the definition of $\mathcal{P}^{(t+1)}$, it follows that 
\begin{equation}\begin{split}
\Phi(\xi_{\mathcal{P}^{(t+1)}})\leq \Phi(\xi_{\mathcal{P}^{(t)}}[\bx_{1*}, \bx_{3*},\gamma]). \label{speed4}
\end{split}\end{equation}
Using again a Taylor series expansion near $\gamma = 0$, and applying Equations~(\ref{speed3}) and (\ref{convg:1}), we can show that
\begin{equation}\begin{split}
\Phi(\xi_{\mathcal{P}^{(t)}}[x_{1*}, x_{3*},\gamma])=&\Phi(\xi_{\mathcal{P}^{(t)}})+\left(F_{\Phi}(\xi_{\mathcal{P}^{(t)}};\bx_{1*})- F_{\Phi}(\xi_{\mathcal{P}^{(t)}};\bx_{3*})\right)\gamma\\
&+ \frac{1}{2} \left. \gamma^2 \frac{\partial^2\Phi(\xi[x_{1*}, x_{3*},\gamma])}{\partial \gamma^2}\right|_{\gamma=\gamma'}\\
\leq& \Phi_{p}(\xi_{\mathcal{P}^{(t)}})-\epsilon\gamma+ \frac{1}{2}K\gamma^2, \label{speed5}
\end{split}\end{equation}
where $\gamma'\in[0,\gamma]$. If $K>n\epsilon$, let $\gamma=\frac{\epsilon}{K}$. By Equation~(\ref{speed5}), we have that
\begin{equation}\begin{split}
\Phi(\xi_{\mathcal{P}^{(t)}}[x_{1*}, x_{3*},\gamma])-\Phi(\xi_{\mathcal{P}^{(t)}})\leq -\frac{\epsilon^2}{2K}. \label{speed6}
\end{split}\end{equation}
If $K\leq n\epsilon$, let $\gamma=1/n$. Then we have 
\begin{equation}\begin{split}
\Phi(\xi_{\mathcal{P}^{(t)}}[x_{1*}, x_{3*},\gamma])-\Phi(\xi_{\mathcal{P}^{(t)}})\leq -\frac{\epsilon}{2n}. \label{speed7}
\end{split}\end{equation}
By Equations~(\ref{speed4}), (\ref{speed6}), and (\ref{speed7}), we see that 
\begin{equation}\label{speed8}
\Phi(\xi_{\mathcal{P}^{(t+1)}})-\Phi(\xi_{\mathcal{P}^{(t)}})<\begin{cases}
-\frac{\epsilon^2}{2K} & \text{ when }  K>n\epsilon \\
-\frac{\epsilon}{2n} & \text{ when } K\leq n\epsilon.
\end{cases}
\end{equation}
Next, consider the case $\mathcal{X}_2^{(t)} \ne \emptyset$. Notice that Algorithm~2 guarantees that the weights of points $\bx_i\in \mathcal{X}_2^{(t)}$ are optimized strictly between 0 and $1/n$. By Lemma~1 and the setting of Algorithm~2, $(i)$ and $(ii)$ of Condition $(c)$ in Theorem 4 are automatically satisfied. There are two possibilities when $(iii)$ of Condition $(c)$ in Theorem 4 is not satisfied:
\begin{itemize}
    \item[(i)] $F_{\Phi}(\xi_{\mathcal{P}^{(t)}};\bx_{1*})-s< -\frac{\epsilon}{2}  \text{ when } \mathcal{X}_1^{(t)} \ne \emptyset$
    \item[(ii)] $F_{\Phi}(\xi_{\mathcal{P}^{(t)}};\bx_{3*})-s> \frac{\epsilon}{2} \text{ when } \mathcal{X}_3^{(t)} \ne \emptyset.$
\end{itemize}
If (i) holds, let $\bx_{2*} \in \mathcal{X}_2^{(t)}$ be a point with the largest weight of the points in $\mathcal{X}_2^{(t)}$.
Since the weights must sum to 1, the weight of $\bx_{2*}$ must be at least $\frac{1}{nN}$. Consider the design $\xi_{\mathcal{P}^{(t)}}[\bx_{1*}, \bx_{2*},\gamma]$ for a small enough $\gamma$. Similarly as for $\xi_{\mathcal{P}^{(t)}}[\bx_{1*}, \bx_{3*},\gamma]$, we have that
\begin{equation}\label{speed9}
\Phi(\xi_{\mathcal{P}^{(t+1)}})-\Phi(\xi_{\mathcal{P}^{(t)}})<\begin{cases}
-\frac{\epsilon^2}{8K} & \text{ when }  K>nN\epsilon/2 \\
-\frac{\epsilon}{4nN} & \text{ when } K\leq nN\epsilon/2.
\end{cases}
\end{equation}
If (ii) holds, let $\bx_{2*} \in \mathcal{X}_2^{(t)}$ be a point with the smallest weight of the points in $\mathcal{X}_2^{(t)}$. The weight of $\bx_{2*}$ is at most $1/(2n)$.
For small enough $\gamma$, consider the new design $\xi_{\mathcal{P}^{(t)}}[\bx_{2*}, \bx_{3*},\gamma]$. Similarly as for $\xi_{\mathcal{P}^{(t)}}[\bx_{1*}, \bx_{3*},\gamma]$, we have
\begin{equation}\label{speed10}
\Phi(\xi_{\mathcal{P}^{(t+1)}})-\Phi(\xi_{\mathcal{P}^{(t)}})<\begin{cases}
-\frac{\epsilon^2}{8K} & \text{ when }  K> n\epsilon \\
-\frac{\epsilon}{8n} & \text{ when } K\leq n\epsilon.
\end{cases}
\end{equation}
Thus, when the conditions
in Theorem~5 are not satisfied for some $t$, then we have by Equations~(\ref{speed8}), (\ref{speed9}), and (\ref{speed10})
\begin{equation}\label{speed11}
\Phi(\xi_{\mathcal{P}^{(t+1)}})-\Phi(\xi_{\mathcal{P}^{(t)}})<max(-\frac{\epsilon^2}{8K}, -\frac{\epsilon}{4nN}).
\end{equation}
With
\begin{equation*}
\xi^*=argmin_{\xi\in \Xi}\Phi\left({\xi}\right),
\end{equation*} 
for a given $\epsilon>0$, after at most 
\begin{equation}\label{speed12}
(\Phi(\xi_{\mathcal{P}^{(0)}})-\Phi(\xi^*)) \times \max(\frac{8K}{\epsilon^2}, \frac{4nN}{\epsilon})
\end{equation}
iterations,  $\mathcal{P}^{(t)}$ converges to a $\epsilon$-approximation of $\xi^*$. Notice that $\Phi(\xi_{\mathcal{P}^{(0)}})-\Phi(\xi^*)$ is finite and $K$ is constant; this implies that the algorithm's run time is bounded by a polynomial function in $n$ and $N$. Consequently, the desired conclusion follows. 
\end{proof}

\section{Further considerations for the main algorithm} \label{implementation} 
Theorems~\ref{convg} and \ref{speed} provide assurances for the performance of the main algorithm. But choices that can impact speed still need to be made. This includes selecting initial subdata $S$ of size $n$ in line~\ref{alg:startS} of Algorithm~\ref{alg:main} and an efficient algorithm to derive optimal weights in line~\ref{alg:optw} of Algorithm~\ref{alg:main}. 

\subsection{Selecting initial subdata} Starting Algorithm~\ref{alg:main} in line~\ref{alg:startS} with efficient subdata will generally result in a shorter running time to reach convergence. We recommend that an adaptation of the subdata selection method in \cite{Wang2019Information-BasedRegression} and \cite{Cheng2020Information-basedRegression} be used to select the initial subdata. The models considered by these authors use a predictor of the form $\beta_0+\boldsymbol{\beta}^T \boldsymbol{x}$, and their recommendation is roughly to select for each $j$, $1\leq j\leq p$, the $n/(2p)$ points $\bx_i$ with the largest values for the $j$th feature and the $n/(2p)$ points $\bx_i$ with the smallest values for the $j$th feature. They demonstrate desirable theoretical results for this strategy, called the IBOSS strategy, and we will adapt it here for more general statistical models. 

More generally, the information at a point $\bx$ will be proportional to $I_{\boldsymbol{\theta}}(\bx)=f(\bx)\times f^T(\bx)$. Here, $f(\bx)$ is a vector for a univariate response variable. If the response variable is multivariate, $f(\bx)$ is a matrix. In that case, we vectorize this matrix, and call it again $f(\bx)$. 

Form the matrix $(f(\bx_1),\ldots,f(\bx_N))$, remove its constant rows (if any), and let $F$ be the resulting matrix with $p_2$ rows. We propose the following algorithm to generate initial subdata $S$. 
\begin{itemize}
    \item[(1)] For the first row of $F$, select $\lfloor n/(2p_2) \rfloor$ points with the largest values and $\lfloor n/(2p_2) \rfloor$ points with the smallest values in this row.
    \item[(2)] For the next row, exclude points that have already been selected, and select $\lfloor n/(2p_2)\rfloor$ points with the largest values and $\lfloor n/(2p_2) \rfloor$ points with the smallest values in this row.
    \item[(3)] Repeat step (2) until all $p_2$ rows have been used.
    \item[(4)] If the total number of points selected is less than $n$, randomly select $n-2p_2\lfloor n/(2p_2) \rfloor$ points from the remaining points.
    \item[(5)] Use the selected $n$ points as the initial subdata for line~\ref{alg:startS} in Algorithm~\ref{alg:main}.
\end{itemize}
The computational complexity of this algorithm is $O(Np_2)$. Although the subdata generated by this fast algorithm may not be highly efficient for all models, it provides a starting point for Algorithm~\ref{alg:main} that tends to be far better than using randomly selected subdata. 

\subsection{Deriving optimal weights} Common algorithms for deriving optimal weights, such as the multiplicative algorithm~\citep{Yu_StatComp}, first-order algorithm~\citep{Ahipasaoglu_StatComp}, and lift-one algorithm~\citep{Huang_Tong_Yang_Sinica}, create a sequence of weights that converges to the optimal weights. These algorithms improve at every iteration and have a closed-form iterative formula for updating the weights. A disadvantage is that they tend to be rather slow, especially for large $N$.  

A different idea for finding optimal weights was proposed in \cite{Yang_Biedermann_Tang_JASA}. With constraints $w_i\geq 0$ and $\sum_{i=1}^Nw_i=1$, the authors proved that optimal weights correspond to a boundary point, i.e., $w_i=0$ for at least one $i$, or to a critical point for the objective function, i.e.,  $\frac{\partial \Phi(\xi_{\mathcal{P}^{(t)}})}{\partial w_i}=0$ for all $i$  (see Theorem 4 in \cite{Yang_Biedermann_Tang_JASA}), where $\mathcal{P}^{(t)}$ is as in the iteration procedure of Algorithm~\ref{alg:main}. With constraints $0\leq w_i\leq 1/n$ and $\sum_{i=1}^Nw_i=1$, the same result holds, except that boundary points are now those points with $w_i=0$ or $w_i=1/n$ for at least one $i$. Thus, optimal weights can be obtained by solving nonlinear equations. Normally, there is no closed-form solution, so we use a numerical approach based on Newton’s method. In general, this has the desirable quadratic convergence rate, meaning that the number of correct digits roughly doubles after each iteration \citep{Isaacson_Keller}. This alone already makes the algorithm fast, and, using the notation of the iteration step in Algorithm~\ref{alg:main}, it is applied only to the points in $\mathcal{X}_2^{(t)}$. With $n_2$ and $n_3$ as the number of points in $\mathcal{X}_2$ and $\mathcal{X}_3$, respectively, the procedure for deriving optimal weights using Newton's method is presented as Algorithm~\ref{alg:owalg}. The vector ${\boldsymbol{w}_2^{(j)}}$ in the algorithm will contain weights for the points in $\mathcal{X}_2$. During the algorithm entries can temporarily become negative (line~\ref{alg:wght} of Algorithm~\ref{alg:owalg}), but in the end all elements will be between 0 and $1/n$. The length of the vector can also change during the algorithm (line~\ref{alg:minus} of Algorithm~\ref{alg:owalg}) because the size of $\mathcal{X}_2$ changes when a weight becomes equal to 0 or $1/n$.

\begin{algorithm}
\caption{Optimal Weights Algorithm}
\label{alg:owalg} 
\begin{algorithmic}[1]
\State \textbf{Input}: The input for Algorithm~\ref{alg:main}; a bounded approximate design $\xi_\mathcal{P}$ with $0 \le w_i \le 1/n$ and the corresponding partition $\mathcal{P} = (\mathcal{X}_1, \mathcal{X}_2, \mathcal{X}_3)$ with $\mathcal{X}_2 \not = \emptyset$; a positive scalar $\epsilon_1$; the maximum number of iterations $MI_1$.
\
\State \textbf{Output}: Updated partition $\mathcal{P} = (\mathcal{X}_1, \mathcal{X}_2, \mathcal{X}_3)$ and optimal weights for the points in $\mathcal{X}_2$
\Procedure{Obtain optimal weights}{}
\State Set $n_2$ and $n_3$ as the size of $\mathcal{X}_2$ and $\mathcal{X}_3$, respectively \label{alg:start}
\State If $n_2=1$, set $\boldsymbol{w}_2 = (1-n_3/n)$ and go to line~\ref{alg:out} 
\State Else
\State \hspace{0.3cm} Set $\boldsymbol{\nu}_2^{(0)} =  \frac{1-n_3/n}{n_2} \boldsymbol{1}_{n_2-1}$, and set $\boldsymbol{w}_2^{(0)} = ({\boldsymbol{\nu}_2^{(0)}}^T, 1 - n_3/n - {\boldsymbol{\nu}_2^{(0)}}^T \boldsymbol{1}_{n_2-1})^T$ as the weight vector for points in $\mathcal{X}_2$.
\State \hspace{0.3cm} Set $\alpha = 1$ and $j=1$ \label{alg:j=1}
\State \hspace{0.3cm} Compute $\boldsymbol{\nu}_2^{(j)}=\boldsymbol{\nu}_2^{(j-1)}-\alpha\left(\frac{\partial^2 \Phi(\xi_{\mathcal{P}})}{\partial \boldsymbol{\nu}_2\boldsymbol{\nu}_2^{T}} |_{\boldsymbol{\nu}_2=\boldsymbol{\nu}_2^{(j-1)}}\right)^{-1} \left(\frac{\partial \Phi(\xi_{\mathcal{P}})}{\partial \boldsymbol{\nu}_2} |_{\boldsymbol{\nu}_2=\boldsymbol{\nu}_2^{(j-1)}} \right)$ and set $\boldsymbol{w}_2^{(j)} = ({\boldsymbol{\nu}_2^{(j)}}^T, 1 - n_3/n - {\boldsymbol{\nu}_2^{(j)}}^T \boldsymbol{1}_{n_2-1})^T$ \label{alg:wght}
\State \hspace{0.3cm} If, component-wise, $0 \le \boldsymbol{w}_2^{(j)} \le 1/n$, go to line~\ref{alg:checkeps}
\State \hspace{0.3cm} Else, set $\alpha = \alpha/2$ 
\State \hspace{0.6cm} If $\alpha > 10^{-10}$, go to line~\ref{alg:wght}
\State \hspace{0.6cm} Else, 
for $\boldsymbol{x} \in \mathcal{X}_2$ with the largest weight, if its weight exceeds $1/n$, set $\mathcal{X}_2 = \mathcal{X}_2 \backslash \{\boldsymbol{x}\}$, $\mathcal{X}_3 = \mathcal{X}_3 \cup \{\boldsymbol{x}\}$, and go to line~\ref{alg:start}; otherwise, for $\boldsymbol{x} \in \mathcal{X}_2$ with the smallest weight, set $\mathcal{X}_2 = \mathcal{X}_2 \backslash \{\boldsymbol{x}\}$, $\mathcal{X}_1 = \mathcal{X}_1 \cup \{\boldsymbol{x}\}$, and go to line~\ref{alg:start} \label{alg:minus}
\State \hspace{0.3cm} If $\lVert \frac{\partial \Phi(\xi_{\mathcal{P}})}{\partial \boldsymbol{\nu}_2} |_{\boldsymbol{\nu}_2=\boldsymbol{\nu}_2^{(j-1)}} \rVert < \epsilon_1$, set $\boldsymbol{w}_2 = \boldsymbol{w}_2^{(j)}$ and go to line~\ref{alg:out} \label{alg:checkeps}
\State \hspace{0.3cm} Else, if $j\leq MI_1$, set $j = j+1$ and go to line~\ref{alg:wght}; else set $\boldsymbol{w}_2 = \boldsymbol{w}_2^{(j)}$ and  go to line~\ref{alg:out}
\State Output $\boldsymbol{w}_2$ \label{alg:out}
\EndProcedure
\end{algorithmic}
\end{algorithm}

For the examples in Section 4, we set $MI_1$=40 and $\epsilon_1=10^{-6}$.

\subsection{An information matrix that depends on $\bt$} \label{subsec:theta} As noted previously, the information matrix $I_{\bt}$ depends on $\bt$ for GLMs. This issue has been extensively addressed in the optimal design literature. In the simplest case, one assumes that a guess for $\bt$ is available from prior experience. This value for $\bt$ can then be plugged into $I_{\bt}$, and the resulting matrix can be used in our approach to find the subdata. Alternatively, one could try to find a bounded approximate design that is not optimal for any $\bt$, but that has high efficiency for multiple plausible values of $\bt$. An approach along these lines can be found in \cite{Rios_Stufken_2025}, although it would have to be extended to bounded designs. Another approach would be to assume a prior distribution for $\bt$ that expresses the uncertainty of our knowledge of $\bt$, and use a semi-Bayesian approach to find a bounded approximate design. 

Unlike a typical design of experiments problem, for the subdata selection problem, we have a large amount of data. We can use a randomly selected subset of these data to get an idea of the value of $\bt$. Labels would have to be collected for the randomly selected subset if not yet available. We will not pursue this further here because our objective is to demonstrate the efficiency of our subdata selection method. If $\bt$ is to be estimated, that comparison will be partially obscured by the loss of efficiency resulting from estimating $\bt$.

 \subsection{Some practical considerations}
In lines~\ref{alg:newS} and \ref{alg:newP} of Algorithm~\ref{alg:main} and line~\ref{alg:minus} of Algorithm~\ref{alg:owalg}, each time only one point is exchanged. It is possible to increase the number of points that are exchanged, and this could be more efficient for some problems. However, for the problems we considered and because we have good initial subdata from IBOSS+, exchanging a single point each time is sufficient. {While applying the exchanges to the IBOSS+ subdata results in small efficiency gains, it also results in a significant increase in computation time (cf. Section~\ref{sec:empirical}).}

As discussed prior to Theorem~\ref{speed}, an exact verification of the optimality of a partition $\mathcal{P}$ with optimal weights is not possible by checking conditions (II)(b) and (II)(c) of Theorem~\ref{eq_thm}. Therefore, we will actually use the result of Theorem~\ref{speed} to find an $\epsilon$-approximation of $\xi^*$ for a small positive tolerance value $\epsilon$. For the empirical studies in Section~\ref{sec:empirical},  $\epsilon$ is set to be $10^{-6}$. As explained in Remark~\ref{rem:2}, a scale-invariant approach with a varying value for $\epsilon$ could be considered as an alternative.

Line~\ref{alg:conv} of Algorithm~\ref{alg:main} needs in our experience relatively few iterations to reach convergence. However, to safeguard against the rare situation that more iterations are needed, the maximum number of iterations can be capped by a predetermined number $MI_2$. As a practical recommendation, we suggest that $MI_2$ be set to $n$. Algorithm~\ref{alg:main} typically converges before it reaches $MI_2$, but even when it doesn't, little change occurs after increasing the number of iterations. 

\section{More simulation studies}
In this section, we present additional simulation results and comparisons, including a discussion of tuning parameter selection for the FO method, justification for excluding the PGD method, and further empirical evaluations.

\subsection{About FO method}\label{about_FO}
For the FO algorithm, the description in \cite{Ahipasaoglu_StatComp} provides no details to select an initial design or choose the maximum number of iterations, $MI_3$. 
For a fair comparison to IBOSS OBD, we start the algorithm with the design based on the IBOSS subdata and use the same convergence criterion. For $MI_3$, if it is large, there is a risk that convergence takes a long time. On the other hand, if $MI_3$ is small, the resulting design could be far from optimal. In contrast, IBOSS OBD tends to converge relatively fast. For example, for $N=1000$, $n=100$, and 100 repetitions, Table~\ref{max_iter} compares the ratio of the $A$-optimality criterion values for $\xi_{IBOSS~BOD}$ and $\xi_{FO}$ for $p=2$ and the full second-order model. Multiple values of $MI_3$ are used, keeping $MI_1 = 40$ (recall that $MI_1$ denotes the number of iterations in Algorithm~\ref{alg:owalg}). 

\begin{table}
  \centering
  \caption{Mean and standard deviation (in parentheses) of $\Phi(\xi_{\text{IBOSS OBD}})/\Phi(\xi_{\text{FO}})$ under the $A$-optimality criterion}
  \resizebox{\linewidth}{!}{%
  \begin{tabular}{lcccc}
    \hline
    $MI_3$ & $n$ & $10n$ & $20n$ & $50n$ \\ \hline
    $\Phi(\xi_{\text{IBOSS OBD}})/\Phi(\xi_{\text{FO}})$ 
      & 94.72\% (1.65\%) 
      & 99.95\% (0.03\%) 
      & 99.99\% (0.01\%) 
      & 100.00\% (0.00\%) \\ \hline
  \end{tabular}%
  }
  \label{max_iter}
\end{table}

Table~\ref{max_iter} shows that, in order to obtain results with the FO method that are comparable to those with the IBOSS OBD method, $MI_3 = n$ is insufficient, while the improvements for values of $20n$ and $50n$ are minimal compared to the results for $MI_3=10n$. We observed similar behavior for other models and criteria, so that we will $MI_3=10n$. This makes FO more competitive because a larger value of $MI_3$ comes with a significant increase in computing time, especially for $A$-optimality. The reason for this is that, even when we set $MI_3=50n$, the algorithm improves only slightly and does not reach convergence for $A$-optimality in the scenarios that we considered. This is in contrast to the performance of IBOSS OBD, which converges in all scenarios.

\subsection{About PGD method}\label{about_PGD}
The PGD method is a subset selection approach for measurement-constrained regression that targets statistical efficiency via an A-optimality criterion. It is primarily developed for linear models but can be extended to generalized linear models. Table \ref{PGD} compares the PGD method with IBOSS OBD in terms of the statistical efficiency ($A$-optimality) of the selected subsets and the corresponding computation time. The set up is as the same as that of  Scenario (i), i.e.,  A linear first-order model for $p=10$.
As shown in Table \ref{PGD}, the PGD method achieves very high statistical efficiency. However, its computational cost is relatively high, particularly for large values of $N$. Due to these scalability limitations, we excluded PGD from further consideration in subsequent analyses.
\begin{table}[ht]
\centering
\caption{$A$-optimality criterion for a first-order linear model: mean (std)}
\begin{tabular}{lcccc}
\hline
$(N, n)$ & \multicolumn{2}{c}{(1000, 100)} & \multicolumn{2}{c}{(10000, 100)} \\
\cline{2-5}
 & PGD & IBOSS OBD & PGD & IBOSS OBD \\
\hline
Eff (\%) & 98.90 (0.23) & 99.74 (0.07)& 97.61 (0.50) & 99.66 (0.08) \\
Time (s) & 2.07 (0.06)  & 0.18 (0.11) & 69.96 (1.96) & 0.71 (0.32) \\
\hline
\end{tabular}
\label{PGD}
\end{table}

\subsection{Scenario (ii): A full second-order logistic regression model with $\boldsymbol{p=3}$.}\label{add_simu_2}
A complication is that the information matrix for a logistic regression model depends on the unknown parameters. A common approach to address this is to use a locally optimal design based on a guess for the parameters, either from past experience, a pilot study, or a random sample from the full data. This challenge is the same for all methods, and we will therefore only focus on selecting efficient subdata when the true values of the parameters can be plugged in the information matrix, as is the case for simulations. We set all parameters equal to 1. Since IBOSS and OSS are designed for first-order linear models, they are not considered for this scenario.

We consider  $D$-optimality with only the parameters of the main effects are of interest. The results in  Table~\ref{table:logitD} show that SRS and LEV, while being the fastest methods, achieve efficiencies of only around 10\%. SEQ and FO perform substantially better than SRS and LEV, but remain less competitive than the proposed methods. This is expected, as they are primarily designed for settings in which all model parameters are of interest. IBOSS+ continues to achieve high efficiency while remaining more computationally efficient than SEQ and FO. Both IBOSS++ and IBOSS OBD attain near-perfect efficiencies, with IBOSS OBD converging in 99 out of 100 repetitions. 

\begin{table}
\centering
\caption{Scenario (ii): $D$-optimality criterion for the parameters of the main effects only ($N=100000$ and $n=1000$)}
\begin{tabular}{lccccccc}
\hline
Algorithm & SRS & LEV & SEQ & FO & IBOSS+ & IBOSS++ & IBOSS OBD \\
\hline
Mean Eff (\%) & 9.16 & 10.04 & 71.33 & 73.18 & 93.05 & 99.82 & 100.00 \\
Std Eff (\%)  & 0.67  & 0.60  & 0.76  & 0.56  & 1.63   & 0.26   & 0.00   \\
\hline
Mean Time (s) & 0.0002 & 0.0582 & 0.2009 & 2.3319 & 0.1341 & 0.2837 & 1.0313 \\
Std Time (s)  & 0.0000 & 0.0198 & 0.0348 & 0.2831 & 0.0176 & 0.0385 & 0.2577 \\
\hline
\multicolumn{8}{l}{\footnotesize Mean(size) = 938.54, Std(size) = 41.76; convergence rates of IBOSS OBD is 0.99.} \\
\hline
\end{tabular}%
\label{table:logitD}
\end{table}

We consider  $A$-optimality when all parameters are of interest. The SEQ method yields an average subdata size of less than 250, which is well below the target of 1000. Its efficiency is also much lower compared to that for the $D$-optimality criterion. To obtain a better comparison for the other methods, we excluded SEQ from further consideration so that we could set the subdata size at $n = 1000$ for all other methods.

SRS and LEV remain the fastest methods but have low efficiencies (less than 20\%). All remaining methods are highly efficient. IBOSS+ has the lowest efficiency among them (approximately 96\%), but is also the fastest with an average computing time of about 0.16 seconds. Both IBOSS++ and FO exhibit efficiencies close to 1, while IBOSS OBD achieves nearly perfect efficiency. IBOSS OBD converges in all repetitions, whereas FO never converges. Despite using $MI_3=10n$, this lack of convergence results in significantly longer computing times for FO than for IBOSS OBD. 

\begin{table}
\centering
\caption{Scenario (ii): $A$-optimality criterion for all parameters ($N=100000$ and $n=1000$)}

\begin{tabular}{lcccccc}
\hline
Algorithm & SRS & LEV & FO & IBOSS+ & IBOSS++ & IBOSS OBD \\
\hline
Mean Eff (\%) & 16.60 & 17.11  & 99.99 & 96.08 & 99.30 & 100.00 \\
Std Eff (\%)  & 0.88  & 0.51  & 0.00 & 1.19  & 0.02  & 0.00    \\
\hline
Mean Time (s) & 0.0002 & 0.0606 & 43.9571 & 0.1563 & 0.3221 & 1.5092  \\
Std Time (s)  & 0.0002 & 0.0100 & 1.6908 & 0.0135 & 0.0264 & 0.3608  \\
\hline
\multicolumn{7}{l}{\footnotesize Convergence rates: FO = 0.00 and IBOSS OBD = 1.00.} \\
\hline
\end{tabular}%

\label{table:logitA}
\end{table}

\subsection{Scenario (iii): Clusterwise linear model with $\boldsymbol{p=10}$ and 10 clusters.}\label{add_simu_3}
Clusterwise linear regression (CLR) models aim to uncover latent clusters within the data, such that within each cluster, the relationship between input and output variables can be adequately modeled by a linear first-order regression model. The model involves $G$ gating functions and $G$ corresponding linear regression models (also known as experts). While each observation is generated from one of these models, it is unknown from which one. CLR models have been widely applied in fields such as social sciences, environmental sciences, and engineering \citep{Brusco2003MulticriterionValue, Bagirov2017PredictionRegressionapproach, Khadka2017ComprehensiveSystems}. Research on CLR models remains active, particularly in the development of efficient algorithms to alleviate the high computational demands \citep{Park2017AlgorithmsRegression}.

Unlike standard linear regression, CLR models do not admit a closed-form expression of the maximum likelihood estimator (MLE). Due to the presence of latent cluster memberships, the Expectation-Maximization (EM) algorithm is the primary tool for estimating model parameters. 
As a result, the computational cost of fitting a CLR model can be substantial, especially for large datasets. A subdata approach can help reduce this burden. However, the absence of a closed-form expression for the information matrix for a CLR model complicates the application of optimal design techniques for subdata selection.

\cite{Liu_Stufken_Yang_2024} showed that, under certain regularity conditions, the following matrix for the $i$th data point can be used for subdata selection:
\begin{equation}
        \begin{pmatrix} \pi_1\frac{\bx_i\bx_i^T}{\sigma_1^2} & &  &\mathbf{0}\\
             &\pi_2\frac{\bx_i\bx_i^T}{\sigma_2^2} & &\\
             &&\ddots& \\
            \mathbf{0}& & & & \pi_G\frac{\bx_i\bx_i^T}{\sigma_G^2}
        \end{pmatrix},\label{infor_beta_c}
\end{equation} where $\bx_i$ is the feature vector for the $i$th data point, $\pi_g$ is the probability that the $i$th data point belongs to cluster $g$, and $\sigma_g^2$ is the variance of the error term for the $g$th linear model. The target matrix, which replaces the information matrix, is the matrix in Equation~\eqref{infor_beta_c} summed over the indices $i$ that belong to the subdata.

In our setup, we assume $G = 10$ clusters and $p=10$ features, making the dimension of the matrix in Equation~\eqref{infor_beta_c} equal to 110. Here, $\pi_g$ and $\sigma_g^2$, $g=1,\ldots,G$, are unknown parameters, just as the intercepts and slopes of the 10 linear models. As in scenario (ii), this approach requires some information about the parameters that appear in this matrix, that is, $\pi_g$s and $\sigma_g^2$s. Since this is the same for all methods, we will again use known values in our simulations, and focus on selecting efficient subdata. Specifically, we set $\sigma_g^2 = 1$ and generate $\pi_g$ values independently from a uniform distribution for $g = 1, \ldots, G$. The generated $\pi_g$ values are then normalized to ensure that $\sum_{g=1}^G \pi_g = 1$.

For this relatively more complex setting, we exclude the LEV and FO methods because they depend on a specific matrix structure that is not directly applicable here. The SEQ approach starts by selecting the first $5\times 110=550$ data points. With a target subdata size of 1000, it would only allow judicious selection of 450 data points. Therefore, to give SEQ a better chance to shine, we consider a subdata size of 10000 in this scenario. 

Under the $D$-optimality criterion in Table~\ref{CLR_D1}, SRS achieves an efficiency of 53\% and is the fastest method. SEQ performs well, with an efficiency of 97\%. IBOSS+ outperforms SEQ, achieving near-perfect efficiency while requiring less computing time. Both IBOSS++ and IBOSS OBD exhibit nearly perfect efficiency, but at the cost of significantly longer computing times. In particular, IBOSS OBD converges in all repetitions, which ensures high average efficiency. Additionally, we observe that the average size of the selected subdata is less than 8000, which is considerably below the target size of 10000.

\begin{table}
\centering
\caption{Scenario (iii): $D$-optimality criterion for all parameters ($N=100000$ and $n=10000$)}
\begin{tabular}{lccccc}
\hline
Algorithm & SRS & SEQ & IBOSS+ & IBOSS++ & IBOSS OBD \\
\hline
Mean Eff (\%) & 53.39 & 97.03 & 99.87 & 100.00 & 100.00 \\
Std Eff (\%)  & 0.26  & 0.13  & 0.02  & 0.00   & 0.00   \\
\hline
Mean Time (s) & 0.0002 & 2.4255 & 1.7737 & 89.8514 & 175.9158 \\
Std Time (s)  & 0.0000 & 0.1496 & 0.2002 & 5.3997 & 9.7151 \\
\hline
\multicolumn{6}{l}{\footnotesize Mean(size) = 7881.3, Std(size) = 104.39; convergence rate of IBOSS OBD = 1.00.} \\
\hline
\end{tabular}%
\label{CLR_D1}
\end{table}

For the $A$-optimality criterion, just as in Scenario (ii), the average subdata size selected by the SEQ approach falls at less than 1000 far below the target of 10000. As a result, SEQ is also very inefficient for the same reason as explained in the previous paragraph. Consequently, SEQ is excluded from further consideration, and the subdata size is fixed at 10000 for the other methods.

Table~\ref{CLR_A1} shows that the performance of all methods is similar to that for the $D$-optimality criterion. In particular, IBOSS+ maintains high efficiency while requiring significantly less computing time than IBOSS++ and IBOSS OBD, which achieve nearly perfect efficiencies. IBOSS OBD again converges in all repetitions, ensuring the reliability of its results.

\begin{table}
\centering
\caption{Scenario (iii): $A$-optimality criterion for all parameters ($N=100000$ and $n=10000$))}
\begin{tabular}{lcccc}
\hline
Algorithm & SRS & IBOSS+ & IBOSS++ & IBOSS OBD \\
\hline
Mean Eff (\%) & 53.58 & 99.41 & 100.00 & 100.00 \\
Std Eff (\%)  & 0.24  & 0.03  & 0.00  & 0.00  \\
\hline
Mean Time (s) & 0.0003 & 1.9989 & 154.8430 & 305.8149 \\
Std Time (s)  & 0.0001 & 0.1740 & 9.7072 & 13.3035  \\
\hline
\multicolumn{5}{l}{\footnotesize Convergence rate of IBOSS OBD = 1.00} \\
\hline
\end{tabular}%
\label{CLR_A1}
\end{table}

\end{document}